\tikzset{middlearrow/.style={
        decoration={markings,
            mark= at position 0.6 with {\arrow{#1}} ,
        },
        postaction={decorate}
    }
}
\tikzset{middledoublearrow/.style={
        decoration={markings,
            mark= at position 0.80 with {\arrow{#1}} ,
        },
        postaction={decorate}
    }
}
\tikzstyle{blue_edge}=[middledoublearrow={>>}, line width=5pt, line join=round, line cap=round]
\tikzstyle{red_edge}=[middlearrow={>}, line width=1.5pt, line join=round, line cap=round]
\tikzstyle{blue_edge_rev}=[middledoublearrow={<<}, line width=5pt, line join=round, line cap=round]
\tikzstyle{red_edge_rev}=[middlearrow={<}, line width=1.5pt, line join=round, line cap=round]
\tikzstyle{blue_unedge}=[line width=5pt, line join=round, line cap=round]
\tikzstyle{red_unedge}=[line width=1.5pt, line join=round, line cap=round]
\definecolor{DarkGray}{rgb}{0.2, 0.2, 0.2}
\newtheorem{theorem}{Theorem}[section]
\newenvironment{proof}[1][Proof]{\begin{trivlist}
\item[\hskip \labelsep {\bfseries #1}]}{\end{trivlist}}
\title{ACYCLIC CONSTRAINT LOGIC AND GAMES}
\runningtitle{Acyclic Constraint Logic and Games}
\author{Hendrik Jan Hoogeboom\thanks{h.j.hoogeboom@liacs.leidenuniv.nl} \and Walter A. Kosters\thanks{w.a.kosters@liacs.leidenuniv.nl} \and Jan N. van Rijn\thanks{j.n.van.rijn@liacs.leidenuniv.nl} \and Jonathan K. Vis\thanks{j.k.vis@liacs.leidenuniv.nl}}
\affiliation{Leiden Institute of Advanced Computer Science, Universiteit Leiden, The Netherlands} 
\begin{document}
\maketitle
\setcounter{page}{3}

\begin{abstract}
Non-deterministic Constraint Logic is a family of graph games introduced by Demaine and Hearn that facilitates the construction of complexity proofs. It is convenient for the analysis of games, providing a uniform view.
We focus on the acyclic version, apply this to Klondike, Mahjong Solitaire and Nonogram (that requires planarity), and discuss
the more complicated game of Dou Shou Qi. While for the first three
games we reobtain known characterizations in a simple and uniform
manner, the result for Dou Shou Qi is new.
\end{abstract}

\section{Introduction}
\label{sec:NCL}

Besides actually playing games, it is of great interest to know how
hard these games are in the sense of computational complexity,
see~\citeaby{Kendall}. The games are usually generalized
to allow for parameters that control board size, number of cards, etc.

In order to study the structural complexity of games, \citeaby{Hearn2006} and \citeaby{Hearn2009} advocate the use of the constraint logic framework.
It consists of a collection of abstract graph games. The games are played on a so-called \emph{constraint graph}. A constraint graph is a weighted directed graph, where each edge has a weight in $\{1,2\}$. The \emph{inflow} of a vertex is defined to be the sum of all weights of the edges that are directed inward. A configuration (i.e., direction of the edges) of a constraint graph is legal if and only if for each vertex it holds that the inflow is at least its minimum inflow, usually 2. A move of a player is typically the reversal of one of the edges; players are only allowed to do moves that result in a legal configuration. 

A notable feature of the constraint logic framework is the fact that constraint graphs can be reduced to equivalent planar versions. Many real-life games are played on a 2-dimensional board. In previous game complexity results (e.g.,~\citeaby{Culberson1999,Flake2002}) crossover gadgets are necessary to overcome the limitations of such a 2-dimensional game board. Crossover gadgets are in general complex and hard to construct. The generic crossover gadget for constraint logic, as presented in Figure~\ref{fig:crossover} below, removes the need to devise a specific crossover gadget for every single game. 

Various games based on constraint graphs are defined in~\citeaby{Hearn2009}. These are categorized based on the number of players and whether there is a bound on the number of moves. We will describe two of those: \emph{Bounded Non-deterministic Constraint Logic} and \emph{Bounded Two-Player Constraint Logic}. In particular we will also pay attention to acyclic versions and planarity issues.

The remainder of this paper is organized as follows.
First we explain the different types of graph games. Next we apply these to the games Klondike, Mahjong Solitaire, Nonogram and Dou Shou Qi.

\subsection{Bounded Non-deterministic Constraint Logic}
\label{sec:BoundedNCL}
Bounded Non-deterministic Constraint Logic (Bounded NCL) is a
one-player game (i.e., a puzzle), played on a constraint graph. A move
is defined to be the reversal of one of the edges, resulting in a
legal configuration, i.e., meeting the inflow condition of the vertices.
Each edge may be reversed at most once. This puts an upper bound on the number of moves in this game, i.e., the number of edges in the graph. One of the edges is defined to be the \emph{target edge}; the player wins if and only if (s)he is able to reverse the target edge. 

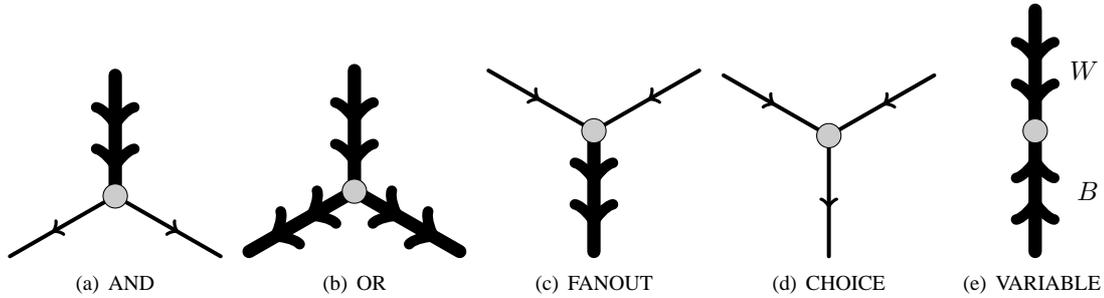
\begin{figure}[!ht]
 \centering
  \subfigure[AND]{
    \begin{tikzpicture}[y=-1cm,scale=0.8]
    \draw [blue_edge] (0, -2) -- (0, 0);
    \draw [red_edge] (0, 0) -- (210:2cm);
    \draw [red_edge] (0, 0) -- (330:2cm);
    \fill [black!20] (0, 0) circle (.2cm);
    \draw (0, 0) circle (.2cm);
    \end{tikzpicture}
    \label{fig:NCLAND}}
  \subfigure[OR]{
    \begin{tikzpicture}[y=-1cm,scale=0.8]
    \draw [blue_edge] (0, -2) -- (0, 0);
    \draw [blue_edge] (0, 0) -- (210:2cm);
    \draw [blue_edge] (0, 0) -- (330:2cm);
    \fill [black!20] (0, 0) circle (.2cm);
    \draw (0, 0) circle (.2cm);
    \end{tikzpicture}
    \label{fig:NCLOR}}
  \subfigure[FANOUT]{
    \begin{tikzpicture}[y=-1cm,scale=0.8]
    \draw [blue_edge] (0, 0) -- (0, 2);
    \draw [red_edge_rev] (0, 0) -- (30:2cm);
    \draw [red_edge_rev] (0, 0) -- (150:2cm);
    \fill [black!20] (0, 0) circle (.2cm);
    \draw (0, 0) circle (.2cm);
    \end{tikzpicture}
    \label{fig:NCLFANOUT}}
  \subfigure[CHOICE]{
    \begin{tikzpicture}[y=-1cm,scale=0.8]
    \draw [red_edge] (0, 0) -- (0, 2);
    \draw [red_edge_rev] (0, 0) -- (30:2cm);
    \draw [red_edge_rev] (0, 0) -- (150:2cm);
    \fill [black!20] (0, 0) circle (.2cm);
    \draw (0, 0) circle (.2cm);
    \end{tikzpicture}
    \label{fig:NCLCHOICE}}
  \subfigure[VARIABLE]{
    \begin{tikzpicture}[y=-1cm,scale=0.8]
    \draw [blue_edge] (0, -2) -- (0, 0);
    \draw [blue_edge] (0, 2) -- (0,0);
    \fill [black!20] (0, 0) circle (.2cm);
    \draw (0, 0) circle (.2cm);
    \node[left] at (1.2,-1) {$W$};
    \node[left] at (1.2,1) {$B$};
    \node[left] at (-1,-1) {\ };
    \end{tikzpicture}
    \label{fig:2CLVARIABLE}}
  \caption[Basic vertices]{Basic vertices, based on Figure~5.2 and Figure~6.2 from~\citeaby{Hearn2006}. Edges with a weight of $2$ use thick lines and have double arrows; edges with a weight of $1$ use thin lines and have a single arrow. Usually these edges are referred to as ``blue'' and ``red'', respectively.}
  \label{fig:thegadgets}
\end{figure}

Theorem~5.1 and Theorem~5.2 from \citeaby{Hearn2009} show (using a reduction from the Boolean satisfiability problem) that the game is NP-complete, even when the initial constraint graph only consists of AND, OR, FANOUT and CHOICE vertices as shown in Figure~\ref{fig:thegadgets}.

\subsection{Bounded Two-Player Constraint Logic}
\label{sec:Bounded2CL}
Bounded Two-Player Constraint Logic (Bounded 2CL) is a two-player perfect-information game played on a constraint graph, and a partitioning of the edges in disjoint sets $B$ and $W$. The players alternate turns. The white player reverses edges in $W$; the black player reverses edges in $B$. For both players it holds that their move has to result in a legal configuration. Each edge may only be reversed once, which (as in Bounded NCL) puts an upper bound on the number of moves in the game. One of the edges in $W$ is defined to be the target edge. The white player wins if (s)he is able to reverse this edge; if a player is unable to move, (s)he loses the game. 

Theorem 6.2 from \citeaby{Hearn2009} shows that the game is PSPACE-complete, even when the constraint graph only consists of the five vertices as shown in Figure~\ref{fig:thegadgets}, where the edges from AND, OR, FANOUT and CHOICE vertices are all in the set $W$. 
Note that the black player can only play bottom edges in VARIABLE
gadgets. In order to avoid clear loss for black an ample amount of
additional black edges is supplied.

\subsection{Acyclic graphs and crossover gadgets}
\label{sec:crossover}
In order to planarize constraint graphs, the construction shown in
Figure~\ref{fig:crossoverFull} can be used. A pair of crossing edges
can be replaced by this gadget. To obtain basic vertices as in
Figure~\ref{fig:thegadgets}, each vertex with four red edges can be
replaced by the so-called half-crossover gadget, which is shown in
Figure~\ref{fig:crossoverHalf}. Additionally, we need to perform
red-blue edge conversions, see~\citeaby{Hearn2006}.

\begin{figure}[!ht]
 \centering
  \subfigure[Crossover]{
    \begin{tikzpicture}[y=-1cm]

    \node [above] at (0.5, 2) {$C$};
    \draw [blue_unedge] (0, 2) -- (1, 2);
    \node [above] at (3.5, 2) {$E$};
    \draw [blue_unedge] (3, 2) -- (4, 2);
    \node [left] at (3.5, 0.5) {$A$};
    \draw [blue_unedge] (3.5, 1) -- (3.5, 0);
    \node [left] at (3.5, 3.5) {$B$};
    \draw [blue_unedge] (3.5, 3) -- (3.5, 4);
    \node [above] at (6.5, 2) {$D$};
    \draw [blue_unedge] (6, 2) -- (7, 2);

    \node [left] at (1.5, 1.5) {$K$};
    \draw [red_unedge] (1, 2) -- (2, 1);
    \node [left] at (1.5, 2.5) {$L$};
    \draw [red_unedge] (1, 2) -- (2, 3);
    \node [left] at (2, 2) {$H$};
    \draw [red_unedge] (2, 1) -- (2, 3);
    \node [left] at (2.5, 1.5) {$I$};
    \draw [red_unedge] (2, 1) -- (3, 2);
    \node [left] at (2.5, 2.5) {$J$};
    \draw [red_unedge] (2, 3) -- (3, 2);
    \node [above] at (2.5, 1) {$F$};
    \draw [red_unedge] (2, 1) -- (3.5, 1);
    \node [above] at (3, 3) {$G$};
    \draw [red_unedge] (2, 3) -- (3.5, 3);
    \node [above] at (4.5, 1) {$M$};
    \draw [red_unedge] (3.5, 1) -- (5, 1);
    \node [above] at (4, 3) {$N$};
    \draw [red_unedge] (3.5, 3) -- (5, 3);
    \node [left] at (4.5, 1.5) {$P$};
    \draw [red_unedge] (4, 2) -- (5, 1);
    \node [right] at (4.5, 2.5) {$Q$};
    \draw [red_unedge] (4, 2) -- (5, 3);
    \node [right] at (5, 2) {$O$};
    \draw [red_unedge] (5, 1) -- (5, 3);
    \node [right] at (5.5, 1.5) {$R$};
    \draw [red_unedge] (5, 1) -- (6, 2);
    \node [right] at (5.5, 2.5) {$S$};
    \draw [red_unedge] (5, 3) -- (6, 2);

    \draw [fill=black!20] (1, 2) circle (.1cm);
    \draw [fill=black!20] (2, 1) circle (.1cm);
    \draw [fill=black!20] (2, 3) circle (.1cm);
    \draw [fill=black!20] (3, 2) circle (.1cm);
    \draw [fill=black!20] (3.5, 1) circle (.1cm);
    \draw [fill=black!20] (3.5, 3) circle (.1cm);
    \draw [fill=black!20] (4, 2) circle (.1cm);
    \draw [fill=black!20] (5, 1) circle (.1cm);
    \draw [fill=black!20] (5, 3) circle (.1cm);
    \draw [fill=black!20] (6, 2) circle (.1cm);

    \end{tikzpicture}

    \label{fig:crossoverFull}}
  \subfigure[Half-crossover]{
    \begin{tikzpicture}[y=-1cm]

    \node [above] at (0.5, 2) {$c$};
    \draw [blue_unedge] (0, 2) -- (1, 2);
    \node [above] at (2.5, 1) {$e$};
    \draw [blue_unedge] (2, 1) -- (3.5, 1);
    \node [above] at (4.5, 1) {$j$};
    \draw [blue_unedge] (3.5, 1) -- (5, 1);
    \node [left] at (3.5, 0.5) {$a$};
    \draw [blue_unedge] (3.5, 1) -- (3.5, 0);
    \node [above] at (2.5, 3) {$f$};
    \draw [blue_unedge] (2, 3) -- (3.5, 3);
    \node [above] at (4.5, 3) {$k$};
    \draw [blue_unedge] (3.5, 3) -- (5, 3);
    \node [left] at (3.5, 3.5) {$b$};
    \draw [blue_unedge] (3.5, 3) -- (3.5, 4);
    \node [above] at (6.5, 2) {$d$};
    \draw [blue_unedge] (6, 2) -- (7, 2);

    \node [left] at (1.5, 1.5) {$h$};
    \draw [red_unedge] (1, 2) -- (2, 1);
    \node [left] at (1.5, 2.5) {$i$};
    \draw [red_unedge] (1, 2) -- (2, 3);
    \node [left] at (2, 2) {$g$};
    \draw [red_unedge] (2, 1) -- (2, 3);
    \node [right] at (5.5, 1.5) {$n$};
    \draw [red_unedge] (6, 2) -- (5, 1);
    \node [right] at (5.5, 2.5) {$o$};
    \draw [red_unedge] (6, 2) -- (5, 3);
    \node [left] at (5, 2) {$m$};
    \draw [red_unedge] (5, 1) -- (5, 3);

    \draw [fill=black!20] (1, 2) circle (.1cm);
    \draw [fill=black!20] (2, 1) circle (.1cm);
    \draw [fill=black!20] (2, 3) circle (.1cm);
    \draw [fill=black!20] (3.5, 1) circle (.1cm);
    \draw [fill=black!20] (3.5, 3) circle (.1cm);
    \draw [fill=black!20] (5, 1) circle (.1cm);
    \draw [fill=black!20] (5, 3) circle (.1cm);
    \draw [fill=black!20] (6, 2) circle (.1cm);

    \end{tikzpicture}
    \label{fig:crossoverHalf}}
  \caption[Planar crossover gadgets]{Planar crossover gadgets, as presented in~\citeaby{Hearn2009}.}
  \label{fig:crossover}
\end{figure}

Edges $A$ and $B$ are called the \emph{vertical external edges}. Edges $C$ and $D$ are called the \emph{horizontal external edges}. 
It can be verified that each vertical external edge can point outward if and only if the other vertical external edge points inward. A similar property holds for the horizontal external edges. The action of reversing both vertical external edges is called \emph{vertical propagation}; the action of reversing both horizontal external edges is called \emph{horizontal propagation}. For example, when the edges $A$ and $B$ are pointing up, and the edges $C$ and $D$ are pointing left, the direction of all other edges follows from the inflow constraints. A sequence of, e.g., $(A, F, H, G, M, O, N, B)$ would then perform a vertical propagation; a sequence of, e.g., $(C, K, I, L, J, E, P, R, Q, S, D)$ would perform a horizontal propagation. 

Although this gadget indeed simulates all the behavior of the games introduced in, 
e.g.,~\citeaby{Hearn2005} (where the same edge can be reversed multiple times) 
this is not the case for the constraint graphs used in Bounded NCL and Bounded 2CL. 
After performing a vertical propagation, due to the restriction that each edge may 
be reversed at most once, the gadget is in such a state that it is impossible to 
perform horizontal propagation, and vice versa. 
Although the construction in Figure~\ref{fig:crossoverFull} is valid, 
after integrating the construction of Figure~\ref{fig:crossoverHalf}
for all the vertices with four red edges 
(in order to restrict ourselves to the gadgets in Figure~\ref{fig:thegadgets}) 
it becomes clear that this is no longer the case. 
After, e.g., edges $F$ and $H$ (corresponding to $a$ and $b$, respectively) 
are reversed, it can be verified that due to the internal state of the half-crossover, 
edges $K$ and $I$ (corresponding to $c$ and $d$, respectively) cannot be reversed 
anymore. The only way to perform both a horizontal and vertical propagation over the 
same crossover gadget is when both a horizontal external edge and vertical external 
edge can be reversed inward at the same moment: a typical example of a \emph{race condition}.
An extensive analysis of the properties of the crossover and half-crossover gadgets as well as red-blue edge converters can be found in~\citeaby{Hearn2006} and~\citeaby{Rijn2012}.





It is clear that not all constraint graphs can be reduced to a planar equivalent using solely the gadgets presented in Figure~\ref{fig:crossover}. In some configurations, in particular in (initially) cyclic graphs, it is impossible to obey the additional constraint imposed by the crossover gadget that the propagation of both directions has to happen at the same moment. However, for acyclic graphs, this is never a problem. Edges can be topologically sorted, and reversed in this order. The complexity proofs of both Bounded NCL and Bounded 2CL (see~\citeaby{Hearn2009}) use graphs corresponding to logical formulas, which indeed require only acyclic graphs; hence Planar Bounded NCL is NP-complete and Planar Bounded 2CL is PSPACE-complete.
Note that all graphs under consideration are acyclic in their initial configuration.

Theorem 5.4 from~\citeaby{Hearn2009} states that the related problem Constraint Graph Satisfiability is also NP-complete:
does a given planar constraint graph, using only (initially undirected) AND and OR vertices, have a legal configuration? Note
that this strictly speaking is not a game in the above sense: we only
ask for a legal ``final'' configuration, not the sequence of moves that can
be used to obtain it.

\section{Klondike}
\label{sec:klondike}
\emph{Klondike}, also known as Patience or Solitaire, is a well-known card game, popularized by Microsoft Windows. The normal version of the game is played with a standard French card deck, without jokers. \citeaby{Yan2004} 
have given a formal definition of the game and provided an algorithm that plays Klondike games with a high success rate;
in their version of the game, often referred to as thoughtful Solitaire, the identity of all cards is known from the beginning.
Several other approaches have been proposed to deal with Klondike, see, e.g., \citeaby{Bjarnason1} and \citeaby{Bjarnason2}.

\citeaby{Longpre2009} have shown, amongst other complexity results, that Klondike is NP-complete even when played with two red suits and one black suit: red diamonds ($\diamondsuit$), red hearts ($\heartsuit$) and black spades ($\spadesuit$). We will give a formal definition of the necessary subset of Klondike and confirm the NP-completeness of Klondike by a reduction from Acyclic Bounded NCL, using an argument improved upon the one from~\citeaby{Rijn2012}.

\subsection{Definition}
Generalized Klondike is played with a card deck containing $m$ suits, each suit containing $n$ cards ranked from $1$ to $n$. A card with rank $1$ is also referred to as an \emph{Ace}; a card with rank $n$ is also referred to as a \emph{King}. The functions $\mathit{rank}(c)$ and $\mathit{suit}(c)$ return the rank and the suit of card $c$, respectively. Each suit is colored either red or black. The function $\mathit{color}(s)$ returns the color of suit $s$.

A Klondike game consists of $m$ \emph{suit stacks}, one or more \emph{build stacks}, a \emph{pile stack} and a \emph{talon}. In the sequel we do not need pile stack and talon, so these will be omitted from the description. A stack is defined to be an ordered list of cards. A \emph{configuration} describes for each card in which stack it is and on which position. For every card in a build stack it also describes whether the card is \emph{face-up} or \emph{face-down}. The subset of cards that are face-up on a certain build stack constitute a \emph{card block}, and will always consist of topmost cards. In an initial configuration all cards are face-down in the build stacks (that can be of different lengths), and the suit stacks are empty. 

We will define the notion of \emph{acceptance}, which determines which moves the player can make. Each suit stack that is empty can only accept an Ace. Every suit stack that is not empty, containing card $t$ on top, accepts card $c$ if and only if $\mathit{suit}(c) = \mathit{suit}(t)$, and $\mathit{rank}(c) = \mathit{rank}(t) + 1$. Therefore, suit stacks accept cards of the same suit in ascending order. Each card block that is not empty, containing card $t$ on top, accepts card $c$ if and only if $\mathit{color}(\mathit{suit}(t))\neq\mathit{color}(\mathit{suit}(c))$ and $\mathit{rank}(c)=\mathit{rank}(t) - 1$. Therefore, build stacks accept cards in descending order, of alternating colors. We will not employ the usual property that an empty
build stack (only) accepts a King.

On each turn, the player can play cards in the following manner:
\begin{enumerate}
 \item If all cards on a build stack are face-down, the card on top can be turned face-up, thereby creating a singleton card block.
 \item A whole card block $p$ can be moved to the top of another card block $q$, provided that $q$ accepts the card at the bottom of $p$. (In some versions of the game a partial card block can also be moved in this manner.)
\item The top card $c$ of a card block can be moved to a suit stack, provided that the suit stack accepts $c$.
\end{enumerate}
The goal is to move all cards to the suit stacks, and when this is achieved the player has won.

\subsection{NP-completeness}
In order to prove NP-completeness, we will show that every Acyclic Bounded NCL graph can be transformed to a Klondike configuration, in such a way that the Klondike game can be won if and only if the target edge of the Acyclic Bounded NCL graph can be flipped. 
So we study the corresponding decision problem \textsc{Klondike}: given a Klondike configuration, can the player win?

\begin{figure}[!ht]
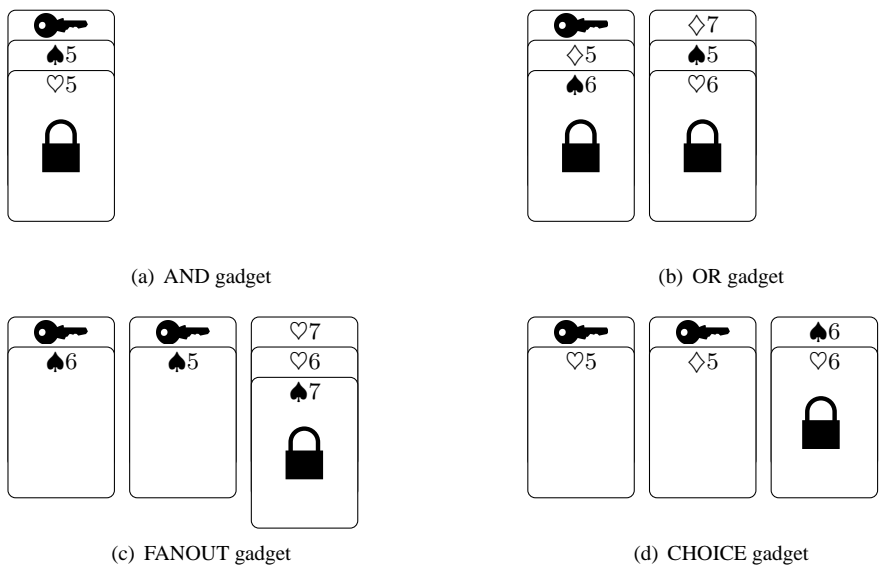

 \centering
  \subfigure[AND gadget\ \ \ \ \ \ \ \ \ \ \ \ \ \ \ \ \ ]{
    \input{images/gadgets_klondike//and.tex}
    \label{fig:patienceAnd}}
    \hspace{2mm}
  \subfigure[OR gadget\ \ \ \ \ \ \ \ \ \ \ \ \ \ \ \ \ ]{
    \input{images/gadgets_klondike//or.tex}
    \label{fig:patienceOr}}

  \subfigure[FANOUT gadget\ \ \ \ \ \ \ \ \ \ \ \ \ \ \ \ \ ]{
    \input{images/gadgets_klondike/fanout.tex}
    \label{fig:patienceFanout}}
    \hspace{2mm}
  \subfigure[CHOICE gadget\ \ \ \ \ \ \ \ \ \ \ \ \ \ \ \ \ ]{
    \input{images/gadgets_klondike//choice.tex}
    \label{fig:patienceChoice}}
  \caption[Klondike gadgets]{Klondike gadgets. Note that $\spadesuit5$ in the AND gadget is also a lock card.}
  \label{fig:patienceGadgets}
\end{figure}

We will use the four gadgets from Figure~\ref{fig:patienceGadgets}. The gadgets consist of one, two or three build stacks, with all cards initially face-down. Each gadget gets a range of unique ranks assigned to it; for simplicity, in the figure we use the range 5--8 for all gadgets.
A \emph{lock card} represents the tail of an edge adjacent to the corresponding NCL vertex; a lock icon is displayed on these cards. A \emph{key card} represents the head of an edge adjacent to the corresponding NCL vertex; a key icon is displayed on these cards. The rank of each key card is within the range of another gadget. The suit and rank of a key card is chosen such that once turned face-up, it accepts a lock card of the gadget from which the corresponding NCL edge is pointing (locks must be moved to their keys). The gadgets are constructed in such a way that the key card can be turned face-up if and only if the corresponding NCL edge can be flipped. Note that in the AND gadget $\spadesuit5$ is also a lock card (without having a lock image).

For each lock card $\ell$ it is easy to see which card should be turned face-up in order to move it. If $\mathit{color}(\mathit{rank}(\ell))$ is red, this card is $\spadesuit(\mathit{rank}(\ell) + 1)$, and otherwise it is $\heartsuit(\mathit{rank}(\ell) + 1)$ (the card $\diamondsuit(\mathit{rank}(\ell) + 1)$ will be only made available during the end play, or in the case of the OR gadget is positioned deeper within the gadget). These cards serve as key cards in other gadgets.

Now we note that the four gadgets indeed act as intended.
For instance, consider the OR gadget. In order to turn the key card face-up,
either the lock card $\spadesuit6$ (followed by $\diamondsuit5$) must be moved to
its corresponding key card $\heartsuit7$, or the lock card $\heartsuit6$ (followed by $\spadesuit5$) must be moved to
its corresponding key card $\spadesuit7$ after which  $\spadesuit6$ and $\diamondsuit5$ can
be moved to $\diamondsuit7$. If both key cards are available, both sequences can be played.
The AND gadget has a fixed order to free the key card, which is sufficient for our purpose.

Now we have:

\begin{theorem}
\textsc{Klondike} is NP-complete.
\end{theorem}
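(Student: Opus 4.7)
The plan is to prove membership in NP and then NP-hardness by reduction from Acyclic Bounded NCL, which is NP-complete by Theorem~5.1 of~\citeaby{Hearn2009} together with the acyclicity remark of Section~\ref{sec:crossover}. Membership is immediate: in a Klondike play of this restricted form, each card is turned face-up at most once, shipped to a suit stack at most once, and moved between build stacks a bounded number of times, so a winning sequence has length polynomial in the number of cards and serves as a succinct certificate.

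For hardness, given an acyclic constraint graph $G$ with target edge $t$, I would construct a Klondike configuration as follows. Topologically sort the vertices of $G$, and for the $i$-th vertex install the gadget of its type from Figure~\ref{fig:patienceGadgets}, using a fresh contiguous range of four ranks (shifting the sample range $5$--$8$ by $4i$) so that no two gadgets share a rank. For every NCL edge $(u,v)$, identify the designated lock card in the gadget for $u$ with the designated key card in the gadget for $v$, choosing the key card's suit and rank so that it accepts precisely that lock card under Klondike's colour-and-rank rule (this is exactly the role reserved for key cards in the paragraph preceding the theorem). The key card corresponding to $t$ is the one whose exposure gates the final cleanup.

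Correctness is a biconditional. For the forward direction, I would simulate an NCL reversal sequence ending in $t$: whenever edge $(u,v)$ is reversed, the gadget analysis already given for AND, OR, FANOUT and CHOICE shows that the corresponding key card in the gadget for $v$ can be turned face-up at that moment, because the inflow condition at $v$ translates exactly into availability of the required lock cards inside $v$'s gadget. After $t$'s key is exposed, a routine endgame flushes all remaining face-down cards to the suit stacks, using the deeper $\diamondsuit(\mathit{rank}(\ell){+}1)$ cards kept in reserve for cleanup. For the converse, any winning play must at some point expose the key card for $t$; by induction on the order of key exposures one shows that the sequence of such exposures projects onto an NCL reversal sequence satisfying every vertex's inflow constraint, and therefore reversing $t$.

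The main obstacle is the backward direction: ruling out \emph{cheating} plays that expose key cards without the corresponding NCL edge being reversible, or that exploit unintended cross-gadget interactions. Both are blocked by the disjointness of the per-gadget rank ranges together with Klondike's colour/rank acceptance rule, which force every lock card to reach its designated key card and nothing else. Acyclicity of $G$ is essential here: the topological-order rank assignment prevents circular dependencies among gadgets, so no chain of lock-to-key moves can feed back on itself to unlock a gadget before its NCL preconditions are met.
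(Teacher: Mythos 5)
Your overall skeleton matches the paper's: membership in NP via a polynomially bounded certificate (the paper counts exactly $mn$ type‑1 and type‑3 moves and at most $mn$ type‑2 moves), and hardness by installing one gadget of Figure~\ref{fig:patienceGadgets} per vertex with pairwise disjoint rank ranges and wiring each NCL edge as a lock card in one gadget that must travel to a matching key card in another. The soundness side is also handled the same way you describe: the disjoint ranges and the colour/rank acceptance rule mean a lock card fits only its designated key.

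The genuine gap is the endgame. Winning \textsc{Klondike} means moving \emph{every} card to the suit stacks, and you only assert that the target key ``gates the final cleanup'' and that a ``routine endgame'' then finishes the game. Neither half of that is established by your construction. The paper's mechanism is a single additional build stack containing all cards not used in any gadget --- in particular all three Aces and a complete run of every rank, sorted by rank --- protected by a lock card representing the target edge. Because no card can reach a suit stack before an Ace is exposed, this one stack simultaneously forces every winning play to expose the target key (which is what makes ``any winning play must expose the key card for $t$'' true) and supplies the cards needed to clear the board afterwards. Moreover the cleanup is not routine: partially played gadgets still hold face-down cards, and discarding them requires the inductive invariant that all lower ranks are already on the suit stacks. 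That is precisely where acyclicity enters: the topological order is used so that every key card appearing in a gadget has \emph{higher} rank than that gadget's own cards, a property you never state or verify even though you do sort topologically. You instead credit acyclicity with preventing circular lock-to-key unlocking, but soundness is already secured by the gadget mechanics and disjoint ranges (and the input graph is acyclic by hypothesis); the paper's warning that replacing $\diamondsuit 5$ by $\heartsuit 5$ in the OR gadget would break the construction shows how delicate this discardability argument is. Without the protected reserve stack and the rank-ordering invariant, both directions of your biconditional are unproven.
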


\begin{proof}
Reduction from Acyclic Bounded NCL. Given a constraint graph made of AND, OR, FANOUT and CHOICE vertices, we construct a corresponding Klondike configuration using the gadgets shown in Figure~\ref{fig:patienceGadgets}.
Note that planarity is not an issue both here and for Mahjong.

We need a way to ensure that the player can move all cards to the suit stacks if and only if the key card corresponding to the target edge can be turned face-up. To this end,
all cards not used in the gadgets are positioned in one big build stack (ordered by rank and within each rank in $\diamondsuit \heartsuit \spadesuit$ order, with the three Aces at the top and ending with the three Kings at the bottom), protected by a lock card
representing the target edge. Once this card is moved, all these other cards become available
and allow all cards from all gadgets to be moved to the suit stacks. 

Now the fact that the original NCL graph is acyclic is used. Indeed, the gadgets can be numbered,
using a topological sort of the corresponding nodes, and we can take care that for every
gadget the key cards used have higher rank than the cards in the gadgets.
This ensures a proper order for this part of the process.
In fact, even (partially) unplayed gadgets can be ``discarded'', using
the inductive assumption that all cards of lower ranks have already been moved to the suit stacks.
Note that,
if for the OR gadget $\heartsuit5$ were used instead of $\diamondsuit5$, this property would not hold.

For creating the Klondike configuration, the number of cards and stacks we need are both bounded by a linear function of the number of vertices in the corresponding NCL graph. In a winning sequence there are exactly $mn$ moves of type 1, and $mn$ moves of type 3.
As for the type 2 moves, there are at most $mn$ of them: every card block is moved once (when focussing on its bottom card), maybe to a suit stack. Therefore Klondike is in NP, since any potential solution can be verified in polynomial time. \hfill$\Box$

\end{proof}

\section{Mahjong Solitaire}\label{sec:mahjong}

\emph{Mahjong Solitaire}, also known as Shanghai Solitaire, is a one
player puzzle game mainly played on the computer in which the player
is presented with a randomly arranged stack of tiles. The goal is to remove
all tiles in matching pairs of two. \citeaby{Condon1997} have given a
formal definition of this game, and have shown that a version of this
game with imperfect information is PSPACE-complete.
\citeaby{Eppstein2012} has stated a proof that a version of this game
with perfect information is NP-complete. In the paper
by~\citeaby{Bondt2012} Mahjong is proven to be NP-complete by a
reduction from 3-SAT. We will give a formal definition of this game
and validate the latter result by a reduction from Acyclic Bounded
NCL.

\subsection{Definition}

The game uses Mahjong tiles, that are divided into $m$ disjoint
\emph{tile sets} ${\cal{T}}_p$ of $|{\cal{T}}_p| = s_p$ matching tiles,
where $s_p$ is an even number ($p = 1,2,\ldots,m$). We define the set
of all tiles to be ${\cal{T}} = \bigcup_{p} {\cal{T}}_p$.
Two tiles $a$ and $b$ \emph{match}, if and only if for some $p$ it holds
that $a,b \in {\cal{T}}_p$. Below we say that elements of the same
tile set have the same color. We generalize the standard game simply by
assuming that there is an arbitrarily large, finite number of tiles.
A \emph{configuration} $C$ is a set of positions $(i, j, k)$, where
each of $i,j,k$ is a non-negative integer, satisfying the following
constraints:
\begin{enumerate}
\item If $(i,j,k) \in C$ and $(i,j',k) \in C$ where $j < j'$, then for every $j''$ in the range 
[$j,j'$], $(i,j'',k) \in C$;
\item If $(i,j,k) \in C$ where $k > 0$ then $(i,j,k-1) \in C$. 
\end{enumerate}

This captures the fact that tiles are arranged in three
dimensions. Tiles can be stacked on top of each other; all tiles with
common $k$ are at the same height. All tiles with a common $i$~index,
form a \emph{cross section}. Tiles at the same height, with common
$i$~index, form a \emph{row}. The first condition ensures that there
cannot be gaps in a row; the second, that a tile at height $k > 0$
must have a tile underneath it (in fact, at position $(i,j,k-1$)).

With respect to a given configuration, a position $(i,j,k)$ is
\emph{hidden} if in the configuration also a position $(i,j,k+1)$
exists; the other positions are called \emph{visible}. An
\emph{arrangement} consists of a set of tiles ${\cal{T}}$, a
configuration $C$ of size~$|{\cal{T}}|$, and a bijective function $f$ from
the positions of $C$ to ${{\cal{T}}}$. Here
$f(i,j,k)$ denotes the tile at position $(i,j,k)$. If the function $f$ maps
position $(i,j,k)$ to tile $t$ we say $\mathit{pos}(t) = (i,j,k)$. The
elements of ${{\cal{T}}}$ will be mapped to the elements of $C$ in
such a way, that every combination is possible. With
respect to a given arrangement, we say a position $(i,j,k)$ is
\emph{available} if it is not hidden, and either position
$(i,j-1,k) \notin C$ or position $(i,j+1,k) \notin C$ or both, i.e.,
we can only take tiles that are at one of the ends of a row, and that have no tiles on top of it. An
arrangement is called empty if ${\cal{T}}$ is empty.
In order to avoid misunderstandings, all tiles can be seen from the beginning:
the player has perfect information.

A legal move consists of the removal of two matching tiles $a,b$ that
are both available. Formally, ${\cal{T}'} = {\cal{T}} - \{a,b\}$ and
$C' = C - \{\mathit{pos}(a),\mathit{pos}(b)\}$. The game is won if a
series of moves results in the empty arrangement.

\subsection{NP-completeness}
In order to prove NP-completeness, we will show that every Acyclic
Bounded NCL graph can be transformed to a Mahjong configuration, in
such a way that the Mahjong game can be won if and only if the target
edge of the Acyclic Bounded NCL graph can be flipped.
We study the decision problem \textsc{Mahjong Solitaire}: given a
Mahjong configuration, can the player win?

\begin{figure}[!ht]
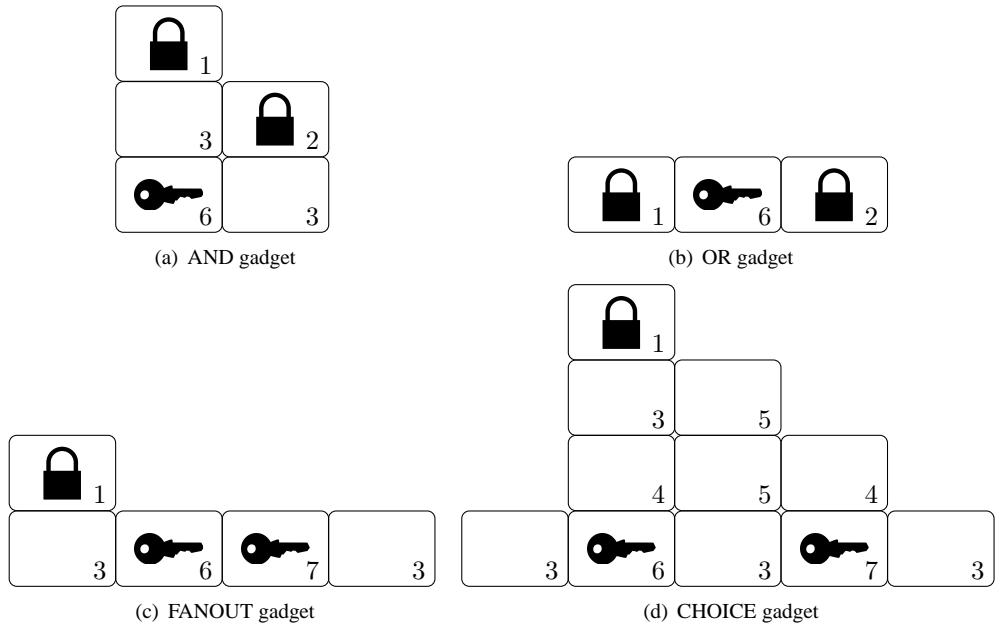

\begin{center}
\subfigure[AND gadget]{
\input{images/gadgets_mahjong/and.tex}
\label{fig:mahjongAnd}}
\subfigure[OR gadget]{
\input{images/gadgets_mahjong/or.tex}
\label{fig:mahjongOr}}
\subfigure[FANOUT gadget]{
\input{images/gadgets_mahjong/fanout.tex}
\label{fig:mahjongFanout}}
\subfigure[CHOICE gadget]{
\input{images/gadgets_mahjong/choice.tex}
\label{fig:mahjongChoice}}
\caption{Mahjong gadgets as a cross section of a configuration.}
\label{fig:mahjongGadgets}
\end{center}
\end{figure}

We will use the gadgets in Figure~\ref{fig:mahjongGadgets}. The
gadgets consist of one cross section containing between two and five tile stacks, with different numbers
representing different tile sets. 
Every gadget has a unique range of
numbers.
Again, for simplicity, in the figure we use the range 1--7 for all gadgets.
A \emph{lock tile} represents the tail
of an edge adjacent to the corresponding NCL vertex; a lock
icon is displayed on these tiles. A \emph{key tile} represents the
head of an edge which is adjacent to the corresponding NCL
vertex; a key icon is displayed on these tiles. 
Corresponding key and lock tiles from different gadgets share the
same number, for obvious reasons.

When a key tile is available, it can be removed together with a lock
tile from one of the other gadgets. The target edge in the
corresponding NCL graph is always represented by a key tile.
When this target edge is available this will initiate the end game,
which is always winning for the player as we will show further on.

The four gadgets in Figure~\ref{fig:mahjongGadgets} have their intended
behavior. For instance, consider the CHOICE gadget. To free either one
of the key tiles the lock tile has to be removed. Now, the actual
choice has to be made: the newly freed ``3-tile'' must be used to
remove either the leftmost or rightmost ``3-tile''. After removing both
``4-tiles'' precisely one of the key tiles is available.
Note that the gadgets resemble those for Klondike; in fact, the AND gadget can also be modeled to look
even more like its counterpart.

Now we have:

\begin{theorem}
\textsc{Mahjong Solitaire} is NP-complete.
\end{theorem}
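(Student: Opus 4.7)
The plan is to mirror the structure of the \textsc{Klondike} proof. Membership in NP is almost immediate: each legal move strictly decreases the number of tiles by two, so any play terminates in $|\mathcal{T}|/2$ moves, and a guessed sequence of moves can be verified in polynomial time by simulating it and checking availability at each step. Hardness then follows from a polynomial-time reduction from Acyclic Bounded NCL using the gadgets of Figure~\ref{fig:mahjongGadgets}.

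Given an acyclic NCL graph on AND, OR, FANOUT and CHOICE vertices with a designated target edge, I would assign to each vertex a gadget of the appropriate type together with a \emph{disjoint} range of numbers (tile sets) for its internal tiles. Each NCL edge $(u,v)$ is realised by choosing the number of the key tile in the gadget at $u$ to equal the number of the lock tile in the gadget at $v$ that it feeds into, so that the key--lock pair forms a matching pair in the Mahjong sense. The gadget description below Figure~\ref{fig:mahjongGadgets} already guarantees that each gadget's key tile becomes available exactly when the analogous NCL edge becomes reversible, which is the heart of the correctness argument: in the forward direction, a topological ordering of NCL edges reversed in a winning NCL play translates move by move into removals of key--lock pairs in the Mahjong instance; in the backward direction, reading off the order in which keys are exposed during a winning Mahjong play yields a legal NCL flipping sequence ending with the target edge.

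The next step is the \emph{end game}, and this is where I expect the main obstacle to lie. As in the Klondike construction, the tiles that remain inside unplayed or partially played gadgets must all be removable once the target key is freed. I would add an auxiliary ``warehouse'' cross section, disjoint from every gadget's number range, containing one copy of each leftover tile that could persist after the target key is taken; this warehouse is blocked from the start of play by a single lock tile whose matching partner is precisely the target key tile. Acyclicity is used exactly here: topologically sorting the gadgets and giving the key tiles higher (or appropriately chosen) numbers within the layout guarantees that at the end game a gadget whose NCL vertex was never ``fired'' can still be demolished, since the tiles it contains are matched by their copies in the warehouse, and their visibility constraints can be discharged in the topological order. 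I would verify, gadget by gadget, that every tile leftover in any reachable partial play of that gadget appears in the warehouse, analogous to the remark in the Klondike proof that one must be careful which suit is used inside the OR gadget.

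Finally I would tally sizes: each vertex contributes a constant number of tiles, each edge contributes one additional number identifier, and the warehouse is linear in the number of gadget tiles, so the whole configuration has size linear in the NCL graph and is constructible in polynomial time. Combined with membership in NP this yields NP-completeness. The delicate part throughout is purely combinatorial bookkeeping for the end game, since the gadgets themselves are already justified in the paragraph preceding the theorem; once the warehouse construction is pinned down, both implications of the reduction follow from the gadget semantics without further work.
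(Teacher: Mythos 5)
Your proposal follows essentially the same route as the paper: the same reduction from Acyclic Bounded NCL with the same gadgets, a protected auxiliary structure for the end game (your ``warehouse'' is the paper's victory gadget, a linear row holding two 5-tiles for every CHOICE gadget), the same use of a topological sort of the gadgets to discharge partially unplayed ones, and the same linear size bound plus polynomial-time verification for membership in NP. The one detail to tighten is the protection of the warehouse: a linear row is available from both ends, so a single lock tile does not suffice; the paper seals it as in the FANOUT gadget, with a matching pair placed at the two ends and a tile matching the target tile stacked on the left one.
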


\begin{proof}
Reduction from Acyclic Bounded NCL. Given a constraint graph made of
AND, OR, FANOUT and CHOICE vertices, we construct a corresponding
Mahjong configuration using the gadgets shown in
Figure~\ref{fig:mahjongGadgets}.

In order to have a way of clearing all remaining tiles after the target
tile is removed, we supply a victory gadget consisting of one linear
row of tiles with two ``5-tiles'' for every CHOICE gadget. 
The victory
gadget itself is protected in a similar way as the FANOUT gadget: a pair 
of matching tiles is placed at both sides, and a tile matching the target 
tile is placed on the left one of these. This ensures that none of the tiles
in the victory gadget can be used before the tiles representing the target 
edge are removed.
Again, we will use the fact that the original NCL graph is acyclic by
numbering the gadgets using the topological sort of the corresponding
vertices. This ordering defines the proper order for this process allowing
partially unplayed (CHOICE) gadgets to be removed by using the tiles from the
victory gadget, that are placed in this same order. Note that from this
acyclicity it follows that when the player is able to remove all
CHOICE gadgets, all other (partially) unplayed gadgets can also be removed. 

Both the number of tiles and the number of tile sets we need to use is
linearly bounded by the number of vertices used in the corresponding
Acyclic Bounded NCL graph. Therefore Mahjong Solitaire is in NP, since
any potential solution can be verified in polynomial time. \hfill$\Box$
\end{proof}

\section{Nonogram}
\label{sec:nonograms}
A \emph{Nonogram}, also referred to as a Japanese puzzle, is a logic 
puzzle which can be considered as an image reconstruction problem. 
The player is presented a rectangular grid; for each row and column 
a description consisting of one or more integers is provided, 
representing the numbers of consecutive cells that need to be black. 
If the player can color a subset of cells in such a way that it is 
consistent with the description of all rows and columns, (s)he has 
solved the puzzle and won the game. An example of a Nonogram and its 
solution is shown in Figure~\ref{fig:NonogramExample}.
\begin{figure}[!ht]
 \centering
  \subfigure[$6\times6$ Nonogram]{
    \begin{picture}(48, 48)
\linethickness{0.1mm}
\normalsize\node[Nw=6,Nh=6,linecolor=White](generated_hnode0)(3,33){2}
\node[Nw=6,Nh=6,linecolor=White](generated_hnode1)(9,33){1}
\node[Nw=6,Nh=6,linecolor=White](generated_hnode2)(3,27){1}
\node[Nw=6,Nh=6,linecolor=White](generated_hnode3)(9,27){3}
\node[Nw=6,Nh=6,linecolor=White](generated_hnode4)(3,21){1}
\node[Nw=6,Nh=6,linecolor=White](generated_hnode5)(9,21){2}
\node[Nw=6,Nh=6,linecolor=White](generated_hnode6)(9,15){3}
\node[Nw=6,Nh=6,linecolor=White](generated_hnode8)(9,9){4}
\node[Nw=6,Nh=6,linecolor=White](generated_hnode10)(9,3){1}
\node[Nw=6,Nh=6,linecolor=White](generated_vnode0)(15,39){1}
\node[Nw=6,Nh=6,linecolor=White](generated_vnode2)(21,39){5}
\node[Nw=6,Nh=6,linecolor=White](generated_vnode4)(27,39){2}
\node[Nw=6,Nh=6,linecolor=White](generated_vnode6)(33,39){5}
\node[Nw=6,Nh=6,linecolor=White](generated_vnode8)(39,45){2}
\node[Nw=6,Nh=6,linecolor=White](generated_vnode9)(39,39){1}
\node[Nw=6,Nh=6,linecolor=White](generated_vnode10)(45,39){2}
\multiput(12, 0)(6,0){7}{\color{DarkGray} \line(0,1){48}}
\multiput(0, 36)(0,-6){7}{\color{DarkGray} \line(1,0){48}}
\end{picture}
    \label{fig:NonogramExampleEmpty}}
  \subfigure[Solved Nonogram]{
    \begin{picture}(48, 48)
\linethickness{0.1mm}
\normalsize\node[Nw=6,Nh=6,linecolor=White](generated_hnode0)(3,33){2}
\node[Nw=6,Nh=6,linecolor=White](generated_hnode1)(9,33){1}
\node[Nw=6,Nh=6,linecolor=White](generated_hnode2)(3,27){1}
\node[Nw=6,Nh=6,linecolor=White](generated_hnode3)(9,27){3}
\node[Nw=6,Nh=6,linecolor=White](generated_hnode4)(3,21){1}
\node[Nw=6,Nh=6,linecolor=White](generated_hnode5)(9,21){2}
\node[Nw=6,Nh=6,linecolor=White](generated_hnode6)(9,15){3}
\node[Nw=6,Nh=6,linecolor=White](generated_hnode8)(9,9){4}
\node[Nw=6,Nh=6,linecolor=White](generated_hnode10)(9,3){1}
\node[Nw=6,Nh=6,linecolor=White](generated_vnode0)(15,39){1}
\node[Nw=6,Nh=6,linecolor=White](generated_vnode2)(21,39){5}
\node[Nw=6,Nh=6,linecolor=White](generated_vnode4)(27,39){2}
\node[Nw=6,Nh=6,linecolor=White](generated_vnode6)(33,39){5}
\node[Nw=6,Nh=6,linecolor=White](generated_vnode8)(39,45){2}
\node[Nw=6,Nh=6,linecolor=White](generated_vnode9)(39,39){1}
\node[Nw=6,Nh=6,linecolor=White](generated_vnode10)(45,39){2}
\node[Nw=6,Nh=6,fillcolor=Black,Nmr=0](generated_cell0)(15,33){}
\node[Nw=6,Nh=6,fillcolor=Black,Nmr=0](generated_cell1)(21,33){}
\node[Nw=6,Nh=6,fillcolor=Black,Nmr=0](generated_cell5)(45,33){}
\node[Nw=6,Nh=6,fillcolor=Black,Nmr=0](generated_cell7)(21,27){}
\node[Nw=6,Nh=6,fillcolor=Black,Nmr=0](generated_cell9)(33,27){}
\node[Nw=6,Nh=6,fillcolor=Black,Nmr=0](generated_cell10)(39,27){}
\node[Nw=6,Nh=6,fillcolor=Black,Nmr=0](generated_cell11)(45,27){}
\node[Nw=6,Nh=6,fillcolor=Black,Nmr=0](generated_cell13)(21,21){}
\node[Nw=6,Nh=6,fillcolor=Black,Nmr=0](generated_cell15)(33,21){}
\node[Nw=6,Nh=6,fillcolor=Black,Nmr=0](generated_cell16)(39,21){}
\node[Nw=6,Nh=6,fillcolor=Black,Nmr=0](generated_cell19)(21,15){}
\node[Nw=6,Nh=6,fillcolor=Black,Nmr=0](generated_cell20)(27,15){}
\node[Nw=6,Nh=6,fillcolor=Black,Nmr=0](generated_cell21)(33,15){}
\node[Nw=6,Nh=6,fillcolor=Black,Nmr=0](generated_cell25)(21,9){}
\node[Nw=6,Nh=6,fillcolor=Black,Nmr=0](generated_cell26)(27,9){}
\node[Nw=6,Nh=6,fillcolor=Black,Nmr=0](generated_cell27)(33,9){}
\node[Nw=6,Nh=6,fillcolor=Black,Nmr=0](generated_cell28)(39,9){}
\node[Nw=6,Nh=6,fillcolor=Black,Nmr=0](generated_cell33)(33,3){}
\multiput(12, 0)(6,0){7}{\color{DarkGray} \line(0,1){48}}
\multiput(0, 36)(0,-6){7}{\color{DarkGray} \line(1,0){48}}
\end{picture}
    \label{fig:NonogramExampleSolved}}
  \caption[Nonogram instance]{An example Nonogram~\subref{fig:NonogramExampleEmpty} and its unique solution~\subref{fig:NonogramExampleSolved}, taken from~\citeaby{Batenburg2012}.}
  \label{fig:NonogramExample}
\end{figure}
\citeaby{Batenburg2012} have given a formal definition of 
Nonograms and provided an algorithm for solving many Nonograms 
in polynomial time. In~\citeaby{Nagao1996} it is proven that the 
Another Solution Problem for Nonograms is NP-complete, and more in 
particular that the question whether a given puzzle has~a~solution~is 
NP-complete. We will also give a formal definition of Nonograms and 
show that the latter decision problem is~NP-complete, by reduction 
from Constraint Graph Satisfiability.

\subsection{Definition}
A Nonogram is a puzzle in which the player is presented an $m \times n$ grid of \emph{cells}, consisting of $m$ rows and $n$ columns. 
The state of a cell is either $\mathit{white}=0$ or $\mathit{black}=1$. Initially, all cells are $\mathit{white}$. 
A \emph{line} is defined to be either a row or a column.

For each line a description $d$ is provided, $d$ being an ordered series of integers $(d_1, d_2, \ldots, d_{k})$. 
The description is adhered to, if there are exactly $k$ \emph{black segments} in the line, where each successive black segment $s$ is of size $d_s$ ($s = 1,2,\ldots,k$). A black segment is defined to be a group of consecutive cells in the line, such that all cells within the interval are $\mathit{black}$, and both cells adjacent to the interval, if any, are $\mathit{white}$. 
Now the puzzle is solved if the player can make a subset of the cells black,
in such a way that all descriptions are adhered to. The corresponding decision problem
\textsc{Nonogram} asks if a given Nonogram can be solved.
For more information on Nonograms, the reader is referred to~\citeaby{Batenburg2012} and the references therein.

\subsection{NP-completeness}

We will show that solving Nonograms is NP-complete, by reduction from Constraint Graph Satisfiability (\citeaby{Hearn2009}), only using two initially undirected gadgets: AND and OR.
The global layout of the construction will be as in Figure~\ref{fig:NonogramGlobalLayout}. There will be several groups of $D$ adjacent columns (or rows) where the description consists of a single element, i.e., $m$ (or $n$), such that the pattern of Figure~\ref{fig:NonogramGlobalLayout} is maintained. We call these lines the \emph{separation lines}. Between each group of separation lines, there are $G$ other lines. In the case of Figure~\ref{fig:NonogramGlobalLayout}, $D = 5$ and $G = 7$. 
The descriptions and the width of the delimiters will not interfere with those of the
single elements in between.
As a result of this construction we can specify disjoint \emph{subnonograms} between the separation lines.

\begin{figure}[!ht]
 \centering
  \input{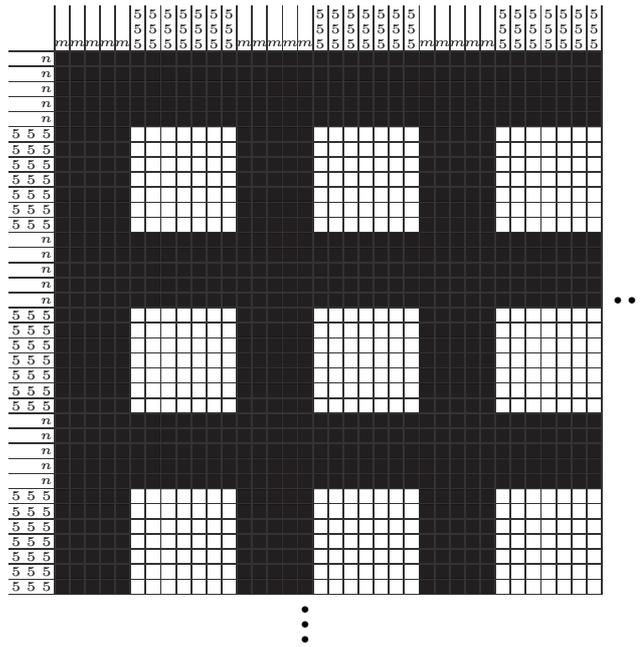}
  \caption[Global layout for Nonogram reduction]{Global layout.}
  \label{fig:NonogramGlobalLayout}
\end{figure}


\begin{figure}[!ht]
 \centering
  \subfigure[]{
    \begin{picture}(48, 20)
\linethickness{0.05mm}
\node[Nw=4,Nh=4,linecolor=White](generated_hnode0)(2,14){3}
\node[Nw=4,Nh=4,linecolor=White](generated_hnode1)(2,10){4}
\node[Nw=4,Nh=4,linecolor=White](generated_hnode2)(2,6){4}
\node[Nw=4,Nh=4,linecolor=White](generated_hnode3)(2,2){3}
\node[Nw=4,Nh=4,linecolor=White](generated_vnode0)(6,18){0}
\node[Nw=4,Nh=4,linecolor=White](generated_vnode1)(10,18){0}
\node[Nw=4,Nh=4,linecolor=White](generated_vnode2)(14,18){0}
\node[Nw=4,Nh=4,linecolor=White](generated_vnode3)(18,18){1}
\node[Nw=4,Nh=4,linecolor=White](generated_vnode4)(22,18){4}
\node[Nw=4,Nh=4,linecolor=White](generated_vnode5)(26,18){4}
\node[Nw=4,Nh=4,linecolor=White](generated_vnode6)(30,18){4}
\node[Nw=4,Nh=4,linecolor=White](generated_vnode7)(34,18){1}
\node[Nw=4,Nh=4,linecolor=White](generated_vnode8)(38,18){0}
\node[Nw=4,Nh=4,linecolor=White](generated_vnode9)(42,18){0}
\node[Nw=4,Nh=4,linecolor=White](generated_vnode10)(46,18){0}
\node[Nw=4,Nh=4,fillcolor=Black,Nmr=0](generated_cell4)(22,14){}
\node[Nw=4,Nh=4,fillcolor=Black,Nmr=0](generated_cell5)(26,14){}
\node[Nw=4,Nh=4,fillcolor=Black,Nmr=0](generated_cell6)(30,14){}
\node[Nw=4,Nh=4,fillcolor=Black,Nmr=0](generated_cell7)(18,10){}
\node[Nw=4,Nh=4,fillcolor=Black,Nmr=0](generated_cell8)(22,10){}
\node[Nw=4,Nh=4,fillcolor=Black,Nmr=0](generated_cell9)(26,10){}
\node[Nw=4,Nh=4,fillcolor=Black,Nmr=0](generated_cell10)(30,10){}
\node[Nw=4,Nh=4,fillcolor=Black,Nmr=0](generated_cell12)(22,6){}
\node[Nw=4,Nh=4,fillcolor=Black,Nmr=0](generated_cell13)(26,6){}
\node[Nw=4,Nh=4,fillcolor=Black,Nmr=0](generated_cell14)(30,6){}
\node[Nw=4,Nh=4,fillcolor=Black,Nmr=0](generated_cell15)(34,6){}
\node[Nw=4,Nh=4,fillcolor=Black,Nmr=0](generated_cell16)(22,2){}
\node[Nw=4,Nh=4,fillcolor=Black,Nmr=0](generated_cell17)(26,2){}
\node[Nw=4,Nh=4,fillcolor=Black,Nmr=0](generated_cell18)(30,2){}
\multiput(4, 0)(4,0){12}{\color{DarkGray} \line(0,1){20}}
\multiput(0, 16)(0,-4){5}{\color{DarkGray} \line(1,0){48}}
\end{picture}
    \label{fig:NonogramSignalSol1}}
  \subfigure[]{
    \begin{picture}(48, 20)
\linethickness{0.05mm}
\node[Nw=4,Nh=4,linecolor=White](generated_hnode0)(2,14){3}
\node[Nw=4,Nh=4,linecolor=White](generated_hnode1)(2,10){4}
\node[Nw=4,Nh=4,linecolor=White](generated_hnode2)(2,6){4}
\node[Nw=4,Nh=4,linecolor=White](generated_hnode3)(2,2){3}
\node[Nw=4,Nh=4,linecolor=White](generated_vnode0)(6,18){0}
\node[Nw=4,Nh=4,linecolor=White](generated_vnode1)(10,18){0}
\node[Nw=4,Nh=4,linecolor=White](generated_vnode2)(14,18){0}
\node[Nw=4,Nh=4,linecolor=White](generated_vnode3)(18,18){1}
\node[Nw=4,Nh=4,linecolor=White](generated_vnode4)(22,18){4}
\node[Nw=4,Nh=4,linecolor=White](generated_vnode5)(26,18){4}
\node[Nw=4,Nh=4,linecolor=White](generated_vnode6)(30,18){4}
\node[Nw=4,Nh=4,linecolor=White](generated_vnode7)(34,18){1}
\node[Nw=4,Nh=4,linecolor=White](generated_vnode8)(38,18){0}
\node[Nw=4,Nh=4,linecolor=White](generated_vnode9)(42,18){0}
\node[Nw=4,Nh=4,linecolor=White](generated_vnode10)(46,18){0}
\node[Nw=4,Nh=4,fillcolor=Black,Nmr=0](generated_cell4)(22,14){}
\node[Nw=4,Nh=4,fillcolor=Black,Nmr=0](generated_cell5)(26,14){}
\node[Nw=4,Nh=4,fillcolor=Black,Nmr=0](generated_cell6)(30,14){}
\node[Nw=4,Nh=4,fillcolor=Black,Nmr=0](generated_cell8)(22,10){}
\node[Nw=4,Nh=4,fillcolor=Black,Nmr=0](generated_cell9)(26,10){}
\node[Nw=4,Nh=4,fillcolor=Black,Nmr=0](generated_cell10)(30,10){}
\node[Nw=4,Nh=4,fillcolor=Black,Nmr=0](generated_cell11)(34,10){}
\node[Nw=4,Nh=4,fillcolor=Black,Nmr=0](generated_cell11)(18,6){}
\node[Nw=4,Nh=4,fillcolor=Black,Nmr=0](generated_cell12)(22,6){}
\node[Nw=4,Nh=4,fillcolor=Black,Nmr=0](generated_cell13)(26,6){}
\node[Nw=4,Nh=4,fillcolor=Black,Nmr=0](generated_cell14)(30,6){}
\node[Nw=4,Nh=4,fillcolor=Black,Nmr=0](generated_cell16)(22,2){}
\node[Nw=4,Nh=4,fillcolor=Black,Nmr=0](generated_cell17)(26,2){}
\node[Nw=4,Nh=4,fillcolor=Black,Nmr=0](generated_cell18)(30,2){}
\multiput(4, 0)(4,0){12}{\color{DarkGray} \line(0,1){20}}
\multiput(0, 16)(0,-4){5}{\color{DarkGray} \line(1,0){48}}
\end{picture}
    \label{fig:NonogramSignalSol2}}
  \caption[Nonogram with two solutions]{The two solutions of a Nonogram featuring two subnonograms horizontally separated by $3$ separation lines.}
  \label{fig:NonogramSignal}
\end{figure}
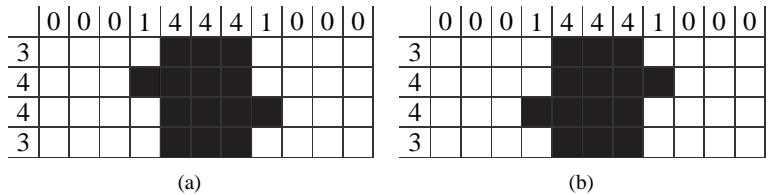

It is also possible to send a signal between two orthogonal adjacent subnonograms by slightly adjusting delimiters between them. This is illustrated in Figure~\ref{fig:NonogramSignal}. The figure shows two subnonograms separated by $D = 3$ separation lines; one subnonogram between cells $(1,1)$ and $(4,4)$, inclusive, and one subnonogram between cells $(1,8)$ and $(4,11)$, inclusive. If we were to decide that $(3,4)$ should be $\mathit{black}$, this would explicitly mean that cell $(3,8)$ cannot be $\mathit{black}$. (Note that this would also explicitly mean that $(2,8)$ is $\mathit{black}$, and $(2,4)$ is not $\mathit{black}$.) The opposite is also true. We will use this property to construct gadgets within a subnonogram, and propagate signals between them. This way we can embed a constraint graph on a Nonogram grid.

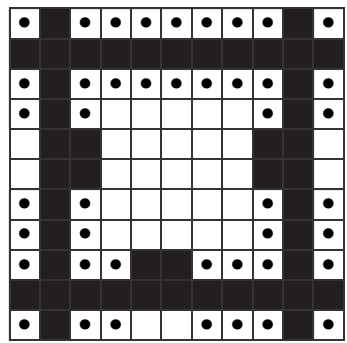
\begin{figure}[!ht]
 \centering
 \begin{picture}(44, 44)
\linethickness{0.1mm}
\node[Nw=4,Nh=4,fillcolor=White,Nmr=0](generated_cellOverlay0)(2,42){$\bullet$}
\node[Nw=4,Nh=4,fillcolor=Black,Nmr=0](generated_cell1)(6,42){}
\node[Nw=4,Nh=4,fillcolor=White,Nmr=0](generated_cellOverlay2)(10,42){$\bullet$}
\node[Nw=4,Nh=4,fillcolor=White,Nmr=0](generated_cellOverlay3)(14,42){$\bullet$}
\node[Nw=4,Nh=4,fillcolor=White,Nmr=0](generated_cellOverlay4)(18,42){$\bullet$}
\node[Nw=4,Nh=4,fillcolor=White,Nmr=0](generated_cellOverlay5)(22,42){$\bullet$}
\node[Nw=4,Nh=4,fillcolor=White,Nmr=0](generated_cellOverlay6)(26,42){$\bullet$}
\node[Nw=4,Nh=4,fillcolor=White,Nmr=0](generated_cellOverlay7)(30,42){$\bullet$}
\node[Nw=4,Nh=4,fillcolor=White,Nmr=0](generated_cellOverlay8)(34,42){$\bullet$}
\node[Nw=4,Nh=4,fillcolor=Black,Nmr=0](generated_cell9)(38,42){}
\node[Nw=4,Nh=4,fillcolor=White,Nmr=0](generated_cellOverlay10)(42,42){$\bullet$}
\node[Nw=4,Nh=4,fillcolor=Black,Nmr=0](generated_cell11)(2,38){}
\node[Nw=4,Nh=4,fillcolor=Black,Nmr=0](generated_cell12)(6,38){}
\node[Nw=4,Nh=4,fillcolor=Black,Nmr=0](generated_cell13)(10,38){}
\node[Nw=4,Nh=4,fillcolor=Black,Nmr=0](generated_cell14)(14,38){}
\node[Nw=4,Nh=4,fillcolor=Black,Nmr=0](generated_cell15)(18,38){}
\node[Nw=4,Nh=4,fillcolor=Black,Nmr=0](generated_cell16)(22,38){}
\node[Nw=4,Nh=4,fillcolor=Black,Nmr=0](generated_cell17)(26,38){}
\node[Nw=4,Nh=4,fillcolor=Black,Nmr=0](generated_cell18)(30,38){}
\node[Nw=4,Nh=4,fillcolor=Black,Nmr=0](generated_cell19)(34,38){}
\node[Nw=4,Nh=4,fillcolor=Black,Nmr=0](generated_cell20)(38,38){}
\node[Nw=4,Nh=4,fillcolor=Black,Nmr=0](generated_cell21)(42,38){}
\node[Nw=4,Nh=4,fillcolor=White,Nmr=0](generated_cellOverlay22)(2,34){$\bullet$}
\node[Nw=4,Nh=4,fillcolor=Black,Nmr=0](generated_cell23)(6,34){}
\node[Nw=4,Nh=4,fillcolor=White,Nmr=0](generated_cellOverlay24)(10,34){$\bullet$}
\node[Nw=4,Nh=4,fillcolor=White,Nmr=0](generated_cellOverlay25)(14,34){$\bullet$}
\node[Nw=4,Nh=4,fillcolor=White,Nmr=0](generated_cellOverlay26)(18,34){$\bullet$}
\node[Nw=4,Nh=4,fillcolor=White,Nmr=0](generated_cellOverlay27)(22,34){$\bullet$}
\node[Nw=4,Nh=4,fillcolor=White,Nmr=0](generated_cellOverlay28)(26,34){$\bullet$}
\node[Nw=4,Nh=4,fillcolor=White,Nmr=0](generated_cellOverlay29)(30,34){$\bullet$}
\node[Nw=4,Nh=4,fillcolor=White,Nmr=0](generated_cellOverlay30)(34,34){$\bullet$}
\node[Nw=4,Nh=4,fillcolor=Black,Nmr=0](generated_cell31)(38,34){}
\node[Nw=4,Nh=4,fillcolor=White,Nmr=0](generated_cellOverlay32)(42,34){$\bullet$}
\node[Nw=4,Nh=4,fillcolor=White,Nmr=0](generated_cellOverlay33)(2,30){$\bullet$}
\node[Nw=4,Nh=4,fillcolor=Black,Nmr=0](generated_cell34)(6,30){}
\node[Nw=4,Nh=4,fillcolor=White,Nmr=0](generated_cellOverlay35)(10,30){$\bullet$}
\node[Nw=4,Nh=4,fillcolor=White,Nmr=0](generated_cellOverlay41)(34,30){$\bullet$}
\node[Nw=4,Nh=4,fillcolor=Black,Nmr=0](generated_cell42)(38,30){}
\node[Nw=4,Nh=4,fillcolor=White,Nmr=0](generated_cellOverlay43)(42,30){$\bullet$}
\node[Nw=4,Nh=4,fillcolor=Black,Nmr=0](generated_cell45)(6,26){}
\node[Nw=4,Nh=4,fillcolor=Black,Nmr=0](generated_cell46)(10,26){}
\node[Nw=4,Nh=4,fillcolor=Black,Nmr=0](generated_cell52)(34,26){}
\node[Nw=4,Nh=4,fillcolor=Black,Nmr=0](generated_cell53)(38,26){}
\node[Nw=4,Nh=4,fillcolor=Black,Nmr=0](generated_cell56)(6,22){}
\node[Nw=4,Nh=4,fillcolor=Black,Nmr=0](generated_cell57)(10,22){}
\node[Nw=4,Nh=4,fillcolor=Black,Nmr=0](generated_cell63)(34,22){}
\node[Nw=4,Nh=4,fillcolor=Black,Nmr=0](generated_cell64)(38,22){}
\node[Nw=4,Nh=4,fillcolor=White,Nmr=0](generated_cellOverlay66)(2,18){$\bullet$}
\node[Nw=4,Nh=4,fillcolor=Black,Nmr=0](generated_cell67)(6,18){}
\node[Nw=4,Nh=4,fillcolor=White,Nmr=0](generated_cellOverlay68)(10,18){$\bullet$}
\node[Nw=4,Nh=4,fillcolor=White,Nmr=0](generated_cellOverlay74)(34,18){$\bullet$}
\node[Nw=4,Nh=4,fillcolor=Black,Nmr=0](generated_cell75)(38,18){}
\node[Nw=4,Nh=4,fillcolor=White,Nmr=0](generated_cellOverlay76)(42,18){$\bullet$}
\node[Nw=4,Nh=4,fillcolor=White,Nmr=0](generated_cellOverlay77)(2,14){$\bullet$}
\node[Nw=4,Nh=4,fillcolor=Black,Nmr=0](generated_cell78)(6,14){}
\node[Nw=4,Nh=4,fillcolor=White,Nmr=0](generated_cellOverlay79)(10,14){$\bullet$}
\node[Nw=4,Nh=4,fillcolor=White,Nmr=0](generated_cellOverlay85)(34,14){$\bullet$}
\node[Nw=4,Nh=4,fillcolor=Black,Nmr=0](generated_cell86)(38,14){}
\node[Nw=4,Nh=4,fillcolor=White,Nmr=0](generated_cellOverlay87)(42,14){$\bullet$}
\node[Nw=4,Nh=4,fillcolor=White,Nmr=0](generated_cellOverlay88)(2,10){$\bullet$}
\node[Nw=4,Nh=4,fillcolor=Black,Nmr=0](generated_cell89)(6,10){}
\node[Nw=4,Nh=4,fillcolor=White,Nmr=0](generated_cellOverlay90)(10,10){$\bullet$}
\node[Nw=4,Nh=4,fillcolor=White,Nmr=0](generated_cellOverlay91)(14,10){$\bullet$}
\node[Nw=4,Nh=4,fillcolor=Black,Nmr=0](generated_cell92)(18,10){}
\node[Nw=4,Nh=4,fillcolor=Black,Nmr=0](generated_cell93)(22,10){}
\node[Nw=4,Nh=4,fillcolor=White,Nmr=0](generated_cellOverlay94)(26,10){$\bullet$}
\node[Nw=4,Nh=4,fillcolor=White,Nmr=0](generated_cellOverlay95)(30,10){$\bullet$}
\node[Nw=4,Nh=4,fillcolor=White,Nmr=0](generated_cellOverlay96)(34,10){$\bullet$}
\node[Nw=4,Nh=4,fillcolor=Black,Nmr=0](generated_cell97)(38,10){}
\node[Nw=4,Nh=4,fillcolor=White,Nmr=0](generated_cellOverlay98)(42,10){$\bullet$}
\node[Nw=4,Nh=4,fillcolor=Black,Nmr=0](generated_cell99)(2,6){}
\node[Nw=4,Nh=4,fillcolor=Black,Nmr=0](generated_cell100)(6,6){}
\node[Nw=4,Nh=4,fillcolor=Black,Nmr=0](generated_cell101)(10,6){}
\node[Nw=4,Nh=4,fillcolor=Black,Nmr=0](generated_cell102)(14,6){}
\node[Nw=4,Nh=4,fillcolor=Black,Nmr=0](generated_cell103)(18,6){}
\node[Nw=4,Nh=4,fillcolor=Black,Nmr=0](generated_cell104)(22,6){}
\node[Nw=4,Nh=4,fillcolor=Black,Nmr=0](generated_cell105)(26,6){}
\node[Nw=4,Nh=4,fillcolor=Black,Nmr=0](generated_cell106)(30,6){}
\node[Nw=4,Nh=4,fillcolor=Black,Nmr=0](generated_cell107)(34,6){}
\node[Nw=4,Nh=4,fillcolor=Black,Nmr=0](generated_cell108)(38,6){}
\node[Nw=4,Nh=4,fillcolor=Black,Nmr=0](generated_cell109)(42,6){}
\node[Nw=4,Nh=4,fillcolor=White,Nmr=0](generated_cellOverlay110)(2,2){$\bullet$}
\node[Nw=4,Nh=4,fillcolor=Black,Nmr=0](generated_cell111)(6,2){}
\node[Nw=4,Nh=4,fillcolor=White,Nmr=0](generated_cellOverlay112)(10,2){$\bullet$}
\node[Nw=4,Nh=4,fillcolor=White,Nmr=0](generated_cellOverlay113)(14,2){$\bullet$}
\node[Nw=4,Nh=4,fillcolor=White,Nmr=0](generated_cellOverlay116)(26,2){$\bullet$}
\node[Nw=4,Nh=4,fillcolor=White,Nmr=0](generated_cellOverlay117)(30,2){$\bullet$}
\node[Nw=4,Nh=4,fillcolor=White,Nmr=0](generated_cellOverlay118)(34,2){$\bullet$}
\node[Nw=4,Nh=4,fillcolor=Black,Nmr=0](generated_cell119)(38,2){}
\node[Nw=4,Nh=4,fillcolor=White,Nmr=0](generated_cellOverlay120)(42,2){$\bullet$}
\multiput(0, 0)(4,0){12}{\color{DarkGray} \line(0,1){44}}
\multiput(0, 44)(0,-4){12}{\color{DarkGray} \line(1,0){44}}
\end{picture}
 \caption[Template for Nonogram gadgets]{Template for Nonogram gadgets. Squares that are black are black in all instances; squares with a dot are white in all instances.}
 \label{fig:nonogramGadgetOverall}
\end{figure}

In Figure~\ref{fig:nonogramGadgetOverall} a template of the gadgets is shown. From the description of every gadget follows immediately that the black cells must be $\mathit{black}$ and the dotted cells must be $\mathit{white}$. The state of the other cells is dependent on the type and state of the gadget.

The gadgets are shown in Figure~\ref{fig:NonogramGadgets} and have $G = 7$. As it does not influence the functionality, on each side these are surrounded by only one separation line. (In the large construction we will use $D = 5$, with obvious adaptations of the descriptions.) These are already $\mathit{black}$. If a cell corresponding to an edge is $\mathit{white}$, this means that the edge is pointing away from the vertex, and vice versa. 
Besides AND and OR gadgets, we also provide two gadgets needed for wiring.
\begin{figure}[!ht]
 \centering
  \subfigure[AND gadget]{
    \begin{picture}(60, 60)
\linethickness{0.1mm}
\node[Nw=4,Nh=4,linecolor=White](generated_hnode0)(10,42){$1$}
\node[Nw=4,Nh=4,linecolor=White](generated_hnode1)(14,42){$1$}
\node[Nw=4,Nh=4,linecolor=White](generated_hnode4)(14,38){$1\!1$}
\node[Nw=4,Nh=4,linecolor=White](generated_hnode8)(10,34){$1$}
\node[Nw=4,Nh=4,linecolor=White](generated_hnode9)(14,34){$1$}
\node[Nw=4,Nh=4,linecolor=White](generated_hnode12)(2,30){}
\node[Nw=4,Nh=4,linecolor=White](generated_hnode13)(6,30){$1$}
\node[Nw=4,Nh=4,linecolor=White](generated_hnode14)(10,30){$3$}
\node[Nw=4,Nh=4,linecolor=White](generated_hnode15)(14,30){$1$}
\node[Nw=4,Nh=4,linecolor=White](generated_hnode16)(6,26){$3$}
\node[Nw=4,Nh=4,linecolor=White](generated_hnode17)(10,26){$1$}
\node[Nw=4,Nh=4,linecolor=White](generated_hnode18)(14,26){$3$}
\node[Nw=4,Nh=4,linecolor=White](generated_hnode20)(6,22){$3$}
\node[Nw=4,Nh=4,linecolor=White](generated_hnode21)(10,22){$2$}
\node[Nw=4,Nh=4,linecolor=White](generated_hnode22)(14,22){$3$}
\node[Nw=4,Nh=4,linecolor=White](generated_hnode24)(2,18){}
\node[Nw=4,Nh=4,linecolor=White](generated_hnode25)(6,18){$1$}
\node[Nw=4,Nh=4,linecolor=White](generated_hnode26)(10,18){$2$}
\node[Nw=4,Nh=4,linecolor=White](generated_hnode27)(14,18){$1$}
\node[Nw=4,Nh=4,linecolor=White](generated_hnode28)(2,14){$1$}
\node[Nw=4,Nh=4,linecolor=White](generated_hnode29)(6,14){$1$}
\node[Nw=4,Nh=4,linecolor=White](generated_hnode30)(10,14){$1$}
\node[Nw=4,Nh=4,linecolor=White](generated_hnode31)(14,14){$1$}
\node[Nw=4,Nh=4,linecolor=White](generated_hnode32)(6,10){$1$}
\node[Nw=4,Nh=4,linecolor=White](generated_hnode33)(10,10){$2$}
\node[Nw=4,Nh=4,linecolor=White](generated_hnode34)(14,10){$1$}
\node[Nw=4,Nh=4,linecolor=White](generated_hnode36)(14,6){$1\!1$}
\node[Nw=4,Nh=4,linecolor=White](generated_hnode40)(6,2){$1$}
\node[Nw=4,Nh=4,linecolor=White](generated_hnode41)(10,2){$1$}
\node[Nw=4,Nh=4,linecolor=White](generated_hnode42)(14,2){$1$}
\node[Nw=4,Nh=4,linecolor=White](generated_vnode0)(18,54){$1$}
\node[Nw=4,Nh=4,linecolor=White](generated_vnode1)(18,50){$1$}
\node[Nw=4,Nh=4,linecolor=White](generated_vnode2)(18,46){$1$}
\node[Nw=4,Nh=4,linecolor=White](generated_vnode4)(22,46){$1\!1$}
\node[Nw=4,Nh=4,linecolor=White](generated_vnode8)(26,54){$1$}
\node[Nw=4,Nh=4,linecolor=White](generated_vnode9)(26,50){$2$}
\node[Nw=4,Nh=4,linecolor=White](generated_vnode10)(26,46){$1$}
\node[Nw=4,Nh=4,linecolor=White](generated_vnode12)(30,54){$1$}
\node[Nw=4,Nh=4,linecolor=White](generated_vnode13)(30,50){$1$}
\node[Nw=4,Nh=4,linecolor=White](generated_vnode14)(30,46){$1$}
\node[Nw=4,Nh=4,linecolor=White](generated_vnode16)(34,54){$1$}
\node[Nw=4,Nh=4,linecolor=White](generated_vnode17)(34,50){$1$}
\node[Nw=4,Nh=4,linecolor=White](generated_vnode18)(34,46){$3$}
\node[Nw=4,Nh=4,linecolor=White](generated_vnode20)(38,54){$1$}
\node[Nw=4,Nh=4,linecolor=White](generated_vnode21)(38,50){$3$}
\node[Nw=4,Nh=4,linecolor=White](generated_vnode22)(38,46){$3$}
\node[Nw=4,Nh=4,linecolor=White](generated_vnode24)(42,58){$1$}
\node[Nw=4,Nh=4,linecolor=White](generated_vnode25)(42,54){$1$}
\node[Nw=4,Nh=4,linecolor=White](generated_vnode26)(42,50){$2$}
\node[Nw=4,Nh=4,linecolor=White](generated_vnode27)(42,46){$1$}
\node[Nw=4,Nh=4,linecolor=White](generated_vnode28)(46,58){$1$}
\node[Nw=4,Nh=4,linecolor=White](generated_vnode29)(46,54){$1$}
\node[Nw=4,Nh=4,linecolor=White](generated_vnode30)(46,50){$2$}
\node[Nw=4,Nh=4,linecolor=White](generated_vnode31)(46,46){$1$}
\node[Nw=4,Nh=4,linecolor=White](generated_vnode32)(50,54){$1$}
\node[Nw=4,Nh=4,linecolor=White](generated_vnode33)(50,50){$2$}
\node[Nw=4,Nh=4,linecolor=White](generated_vnode34)(50,46){$1$}
\node[Nw=4,Nh=4,linecolor=White](generated_vnode36)(54,46){$1\!1$}
\node[Nw=4,Nh=4,linecolor=White](generated_vnode40)(58,54){$1$}
\node[Nw=4,Nh=4,linecolor=White](generated_vnode41)(58,50){$1$}
\node[Nw=4,Nh=4,linecolor=White](generated_vnode42)(58,46){$1$}
\node[Nw=4,Nh=4,fillcolor=Black,Nmr=0](generated_cell1)(22,42){}
\node[Nw=4,Nh=4,fillcolor=Black,Nmr=0](generated_cell9)(54,42){}
\node[Nw=4,Nh=4,fillcolor=Black,Nmr=0](generated_cell11)(18,38){}
\node[Nw=4,Nh=4,fillcolor=Black,Nmr=0](generated_cell12)(22,38){}
\node[Nw=4,Nh=4,fillcolor=Black,Nmr=0](generated_cell13)(26,38){}
\node[Nw=4,Nh=4,fillcolor=Black,Nmr=0](generated_cell14)(30,38){}
\node[Nw=4,Nh=4,fillcolor=Black,Nmr=0](generated_cell15)(34,38){}
\node[Nw=4,Nh=4,fillcolor=Black,Nmr=0](generated_cell16)(38,38){}
\node[Nw=4,Nh=4,fillcolor=Black,Nmr=0](generated_cell17)(42,38){}
\node[Nw=4,Nh=4,fillcolor=Black,Nmr=0](generated_cell18)(46,38){}
\node[Nw=4,Nh=4,fillcolor=Black,Nmr=0](generated_cell19)(50,38){}
\node[Nw=4,Nh=4,fillcolor=Black,Nmr=0](generated_cell20)(54,38){}
\node[Nw=4,Nh=4,fillcolor=Black,Nmr=0](generated_cell21)(58,38){}
\node[Nw=4,Nh=4,fillcolor=Black,Nmr=0](generated_cell23)(22,34){}
\node[Nw=4,Nh=4,fillcolor=Black,Nmr=0](generated_cell31)(54,34){}
\node[Nw=4,Nh=4,fillcolor=Black,Nmr=0](generated_cell34)(22,30){}
\node[Nw=4,Nh=4,fillcolor=Black,Nmr=0](generated_cell42)(54,30){}
\node[Nw=4,Nh=4,fillcolor=Black,Nmr=0](generated_cell45)(22,26){}
\node[Nw=4,Nh=4,fillcolor=Black,Nmr=0](generated_cell53)(54,26){}
\node[Nw=4,Nh=4,fillcolor=White,Nmr=0](generated_cellLetter55)(18,22){\footnotesize${\sf a}$}
\node[Nw=4,Nh=4,fillcolor=Black,Nmr=0](generated_cell56)(22,22){}
\node[Nw=4,Nh=4,fillcolor=Black,Nmr=0](generated_cell64)(54,22){}
\node[Nw=4,Nh=4,fillcolor=White,Nmr=0](generated_cellLetter65)(58,22){\footnotesize${\sf c}$}
\node[Nw=4,Nh=4,fillcolor=Black,Nmr=0](generated_cell67)(22,18){}
\node[Nw=4,Nh=4,fillcolor=Black,Nmr=0](generated_cell75)(54,18){}
\node[Nw=4,Nh=4,fillcolor=Black,Nmr=0](generated_cell78)(22,14){}
\node[Nw=4,Nh=4,fillcolor=Black,Nmr=0](generated_cell86)(54,14){}
\node[Nw=4,Nh=4,fillcolor=Black,Nmr=0](generated_cell89)(22,10){}
\node[Nw=4,Nh=4,fillcolor=Black,Nmr=0](generated_cell97)(54,10){}
\node[Nw=4,Nh=4,fillcolor=Black,Nmr=0](generated_cell99)(18,6){}
\node[Nw=4,Nh=4,fillcolor=Black,Nmr=0](generated_cell100)(22,6){}
\node[Nw=4,Nh=4,fillcolor=Black,Nmr=0](generated_cell101)(26,6){}
\node[Nw=4,Nh=4,fillcolor=Black,Nmr=0](generated_cell102)(30,6){}
\node[Nw=4,Nh=4,fillcolor=Black,Nmr=0](generated_cell103)(34,6){}
\node[Nw=4,Nh=4,fillcolor=Black,Nmr=0](generated_cell104)(38,6){}
\node[Nw=4,Nh=4,fillcolor=Black,Nmr=0](generated_cell105)(42,6){}
\node[Nw=4,Nh=4,fillcolor=Black,Nmr=0](generated_cell106)(46,6){}
\node[Nw=4,Nh=4,fillcolor=Black,Nmr=0](generated_cell107)(50,6){}
\node[Nw=4,Nh=4,fillcolor=Black,Nmr=0](generated_cell108)(54,6){}
\node[Nw=4,Nh=4,fillcolor=Black,Nmr=0](generated_cell109)(58,6){}
\node[Nw=4,Nh=4,fillcolor=Black,Nmr=0](generated_cell111)(22,2){}
\node[Nw=4,Nh=4,fillcolor=White,Nmr=0](generated_cellLetter115)(38,2){\footnotesize${\sf b}$}
\node[Nw=4,Nh=4,fillcolor=Black,Nmr=0](generated_cell119)(54,2){}
\multiput(16, 0)(4,0){12}{\color{DarkGray} \line(0,1){60}}
\multiput(0, 44)(0,-4){12}{\color{DarkGray} \line(1,0){60}}
\end{picture}
    \label{fig:NonogramAND}}
  \subfigure[OR gadget]{
    \begin{picture}(60, 60)
\linethickness{0.1mm}
\node[Nw=4,Nh=4,linecolor=White](generated_hnode0)(10,42){$1$}
\node[Nw=4,Nh=4,linecolor=White](generated_hnode1)(14,42){$1$}
\node[Nw=4,Nh=4,linecolor=White](generated_hnode4)(14,38){$1\!1$}
\node[Nw=4,Nh=4,linecolor=White](generated_hnode8)(10,34){$1$}
\node[Nw=4,Nh=4,linecolor=White](generated_hnode9)(14,34){$1$}
\node[Nw=4,Nh=4,linecolor=White](generated_hnode13)(6,30){$1$}
\node[Nw=4,Nh=4,linecolor=White](generated_hnode14)(10,30){$2$}
\node[Nw=4,Nh=4,linecolor=White](generated_hnode15)(14,30){$1$}
\node[Nw=4,Nh=4,linecolor=White](generated_hnode16)(6,26){$3$}
\node[Nw=4,Nh=4,linecolor=White](generated_hnode17)(10,26){$1$}
\node[Nw=4,Nh=4,linecolor=White](generated_hnode18)(14,26){$3$}
\node[Nw=4,Nh=4,linecolor=White](generated_hnode20)(6,22){$3$}
\node[Nw=4,Nh=4,linecolor=White](generated_hnode21)(10,22){$1$}
\node[Nw=4,Nh=4,linecolor=White](generated_hnode22)(14,22){$3$}
\node[Nw=4,Nh=4,linecolor=White](generated_hnode25)(6,18){$1$}
\node[Nw=4,Nh=4,linecolor=White](generated_hnode26)(10,18){$1$}
\node[Nw=4,Nh=4,linecolor=White](generated_hnode27)(14,18){$1$}
\node[Nw=4,Nh=4,linecolor=White](generated_hnode28)(2,14){$1$}
\node[Nw=4,Nh=4,linecolor=White](generated_hnode29)(6,14){$1$}
\node[Nw=4,Nh=4,linecolor=White](generated_hnode30)(10,14){$1$}
\node[Nw=4,Nh=4,linecolor=White](generated_hnode31)(14,14){$1$}
\node[Nw=4,Nh=4,linecolor=White](generated_hnode32)(6,10){$1$}
\node[Nw=4,Nh=4,linecolor=White](generated_hnode33)(10,10){$2$}
\node[Nw=4,Nh=4,linecolor=White](generated_hnode34)(14,10){$1$}
\node[Nw=4,Nh=4,linecolor=White](generated_hnode36)(14,6){$1\!1$}
\node[Nw=4,Nh=4,linecolor=White](generated_hnode40)(6,2){$1$}
\node[Nw=4,Nh=4,linecolor=White](generated_hnode41)(10,2){$1$}
\node[Nw=4,Nh=4,linecolor=White](generated_hnode42)(14,2){$1$}
\node[Nw=4,Nh=4,linecolor=White](generated_vnode0)(18,54){$1$}
\node[Nw=4,Nh=4,linecolor=White](generated_vnode1)(18,50){$1$}
\node[Nw=4,Nh=4,linecolor=White](generated_vnode2)(18,46){$1$}
\node[Nw=4,Nh=4,linecolor=White](generated_vnode4)(22,46){$1\!1$}
\node[Nw=4,Nh=4,linecolor=White](generated_vnode8)(26,54){$1$}
\node[Nw=4,Nh=4,linecolor=White](generated_vnode9)(26,50){$2$}
\node[Nw=4,Nh=4,linecolor=White](generated_vnode10)(26,46){$1$}
\node[Nw=4,Nh=4,linecolor=White](generated_vnode11)(30,58){$1$}
\node[Nw=4,Nh=4,linecolor=White](generated_vnode12)(30,54){$1$}
\node[Nw=4,Nh=4,linecolor=White](generated_vnode13)(30,50){$1$}
\node[Nw=4,Nh=4,linecolor=White](generated_vnode14)(30,46){$1$}
\node[Nw=4,Nh=4,linecolor=White](generated_vnode16)(34,54){$1$}
\node[Nw=4,Nh=4,linecolor=White](generated_vnode17)(34,50){$1$}
\node[Nw=4,Nh=4,linecolor=White](generated_vnode18)(34,46){$3$}
\node[Nw=4,Nh=4,linecolor=White](generated_vnode20)(38,54){$1$}
\node[Nw=4,Nh=4,linecolor=White](generated_vnode21)(38,50){$3$}
\node[Nw=4,Nh=4,linecolor=White](generated_vnode22)(38,46){$3$}
\node[Nw=4,Nh=4,linecolor=White](generated_vnode25)(42,50){$1$}
\node[Nw=4,Nh=4,linecolor=White](generated_vnode27)(42,46){$1$}
\node[Nw=4,Nh=4,linecolor=White](generated_vnode28)(46,58){$1$}
\node[Nw=4,Nh=4,linecolor=White](generated_vnode29)(46,54){$1$}
\node[Nw=4,Nh=4,linecolor=White](generated_vnode30)(46,50){$1$}
\node[Nw=4,Nh=4,linecolor=White](generated_vnode31)(46,46){$1$}
\node[Nw=4,Nh=4,linecolor=White](generated_vnode32)(50,54){$1$}
\node[Nw=4,Nh=4,linecolor=White](generated_vnode33)(50,50){$2$}
\node[Nw=4,Nh=4,linecolor=White](generated_vnode34)(50,46){$1$}
\node[Nw=4,Nh=4,linecolor=White](generated_vnode36)(54,46){$1\!1$}
\node[Nw=4,Nh=4,linecolor=White](generated_vnode40)(58,54){$1$}
\node[Nw=4,Nh=4,linecolor=White](generated_vnode41)(58,50){$1$}
\node[Nw=4,Nh=4,linecolor=White](generated_vnode42)(58,46){$1$}
\node[Nw=4,Nh=4,fillcolor=Black,Nmr=0](generated_cell1)(22,42){}
\node[Nw=4,Nh=4,fillcolor=Black,Nmr=0](generated_cell9)(54,42){}
\node[Nw=4,Nh=4,fillcolor=Black,Nmr=0](generated_cell11)(18,38){}
\node[Nw=4,Nh=4,fillcolor=Black,Nmr=0](generated_cell12)(22,38){}
\node[Nw=4,Nh=4,fillcolor=Black,Nmr=0](generated_cell13)(26,38){}
\node[Nw=4,Nh=4,fillcolor=Black,Nmr=0](generated_cell14)(30,38){}
\node[Nw=4,Nh=4,fillcolor=Black,Nmr=0](generated_cell15)(34,38){}
\node[Nw=4,Nh=4,fillcolor=Black,Nmr=0](generated_cell16)(38,38){}
\node[Nw=4,Nh=4,fillcolor=Black,Nmr=0](generated_cell17)(42,38){}
\node[Nw=4,Nh=4,fillcolor=Black,Nmr=0](generated_cell18)(46,38){}
\node[Nw=4,Nh=4,fillcolor=Black,Nmr=0](generated_cell19)(50,38){}
\node[Nw=4,Nh=4,fillcolor=Black,Nmr=0](generated_cell20)(54,38){}
\node[Nw=4,Nh=4,fillcolor=Black,Nmr=0](generated_cell21)(58,38){}
\node[Nw=4,Nh=4,fillcolor=Black,Nmr=0](generated_cell23)(22,34){}
\node[Nw=4,Nh=4,fillcolor=Black,Nmr=0](generated_cell31)(54,34){}
\node[Nw=4,Nh=4,fillcolor=Black,Nmr=0](generated_cell34)(22,30){}
\node[Nw=4,Nh=4,fillcolor=Black,Nmr=0](generated_cell42)(54,30){}
\node[Nw=4,Nh=4,fillcolor=Black,Nmr=0](generated_cell45)(22,26){}
\node[Nw=4,Nh=4,fillcolor=Black,Nmr=0](generated_cell53)(54,26){}
\node[Nw=4,Nh=4,fillcolor=White,Nmr=0](generated_cellLetter55)(18,22){\footnotesize${\sf a}$}
\node[Nw=4,Nh=4,fillcolor=Black,Nmr=0](generated_cell56)(22,22){}
\node[Nw=4,Nh=4,fillcolor=Black,Nmr=0](generated_cell64)(54,22){}
\node[Nw=4,Nh=4,fillcolor=White,Nmr=0](generated_cellLetter65)(58,22){\footnotesize${\sf c}$}
\node[Nw=4,Nh=4,fillcolor=Black,Nmr=0](generated_cell67)(22,18){}
\node[Nw=4,Nh=4,fillcolor=Black,Nmr=0](generated_cell75)(54,18){}
\node[Nw=4,Nh=4,fillcolor=Black,Nmr=0](generated_cell78)(22,14){}
\node[Nw=4,Nh=4,fillcolor=Black,Nmr=0](generated_cell86)(54,14){}
\node[Nw=4,Nh=4,fillcolor=Black,Nmr=0](generated_cell89)(22,10){}
\node[Nw=4,Nh=4,fillcolor=Black,Nmr=0](generated_cell97)(54,10){}
\node[Nw=4,Nh=4,fillcolor=Black,Nmr=0](generated_cell99)(18,6){}
\node[Nw=4,Nh=4,fillcolor=Black,Nmr=0](generated_cell100)(22,6){}
\node[Nw=4,Nh=4,fillcolor=Black,Nmr=0](generated_cell101)(26,6){}
\node[Nw=4,Nh=4,fillcolor=Black,Nmr=0](generated_cell102)(30,6){}
\node[Nw=4,Nh=4,fillcolor=Black,Nmr=0](generated_cell103)(34,6){}
\node[Nw=4,Nh=4,fillcolor=Black,Nmr=0](generated_cell104)(38,6){}
\node[Nw=4,Nh=4,fillcolor=Black,Nmr=0](generated_cell105)(42,6){}
\node[Nw=4,Nh=4,fillcolor=Black,Nmr=0](generated_cell106)(46,6){}
\node[Nw=4,Nh=4,fillcolor=Black,Nmr=0](generated_cell107)(50,6){}
\node[Nw=4,Nh=4,fillcolor=Black,Nmr=0](generated_cell108)(54,6){}
\node[Nw=4,Nh=4,fillcolor=Black,Nmr=0](generated_cell109)(58,6){}
\node[Nw=4,Nh=4,fillcolor=Black,Nmr=0](generated_cell111)(22,2){}
\node[Nw=4,Nh=4,fillcolor=White,Nmr=0](generated_cellLetter115)(38,2){\footnotesize${\sf b}$}
\node[Nw=4,Nh=4,fillcolor=Black,Nmr=0](generated_cell119)(54,2){}
\multiput(16, 0)(4,0){12}{\color{DarkGray} \line(0,1){60}}
\multiput(0, 44)(0,-4){12}{\color{DarkGray} \line(1,0){60}}
\end{picture}
    \label{fig:NonogramOR}}
  \subfigure[Wire, turn]{
    \begin{picture}(60, 60)
\linethickness{0.1mm}
\node[Nw=4,Nh=4,linecolor=White](generated_hnode0)(10,42){$1$}
\node[Nw=4,Nh=4,linecolor=White](generated_hnode1)(14,42){$1$}
\node[Nw=4,Nh=4,linecolor=White](generated_hnode4)(14,38){$1\!1$}
\node[Nw=4,Nh=4,linecolor=White](generated_hnode8)(10,34){$1$}
\node[Nw=4,Nh=4,linecolor=White](generated_hnode9)(14,34){$1$}
\node[Nw=4,Nh=4,linecolor=White](generated_hnode14)(10,30){$1$}
\node[Nw=4,Nh=4,linecolor=White](generated_hnode15)(14,30){$1$}
\node[Nw=4,Nh=4,linecolor=White](generated_hnode16)(6,26){$3$}
\node[Nw=4,Nh=4,linecolor=White](generated_hnode17)(10,26){$1$}
\node[Nw=4,Nh=4,linecolor=White](generated_hnode18)(14,26){$1$}
\node[Nw=4,Nh=4,linecolor=White](generated_hnode20)(6,22){$3$}
\node[Nw=4,Nh=4,linecolor=White](generated_hnode21)(10,22){$1$}
\node[Nw=4,Nh=4,linecolor=White](generated_hnode22)(14,22){$1$}
\node[Nw=4,Nh=4,linecolor=White](generated_hnode25)(6,18){$1$}
\node[Nw=4,Nh=4,linecolor=White](generated_hnode26)(10,18){$1$}
\node[Nw=4,Nh=4,linecolor=White](generated_hnode27)(14,18){$1$}
\node[Nw=4,Nh=4,linecolor=White](generated_hnode29)(6,14){$1$}
\node[Nw=4,Nh=4,linecolor=White](generated_hnode30)(10,14){$2$}
\node[Nw=4,Nh=4,linecolor=White](generated_hnode31)(14,14){$1$}
\node[Nw=4,Nh=4,linecolor=White](generated_hnode32)(6,10){$1$}
\node[Nw=4,Nh=4,linecolor=White](generated_hnode33)(10,10){$2$}
\node[Nw=4,Nh=4,linecolor=White](generated_hnode34)(14,10){$1$}
\node[Nw=4,Nh=4,linecolor=White](generated_hnode36)(14,6){$1\!1$}
\node[Nw=4,Nh=4,linecolor=White](generated_hnode40)(6,2){$1$}
\node[Nw=4,Nh=4,linecolor=White](generated_hnode41)(10,2){$1$}
\node[Nw=4,Nh=4,linecolor=White](generated_hnode42)(14,2){$1$}
\node[Nw=4,Nh=4,linecolor=White](generated_vnode0)(18,54){$1$}
\node[Nw=4,Nh=4,linecolor=White](generated_vnode1)(18,50){$1$}
\node[Nw=4,Nh=4,linecolor=White](generated_vnode2)(18,46){$1$}
\node[Nw=4,Nh=4,linecolor=White](generated_vnode4)(22,46){$1\!1$}
\node[Nw=4,Nh=4,linecolor=White](generated_vnode8)(26,54){$1$}
\node[Nw=4,Nh=4,linecolor=White](generated_vnode9)(26,50){$2$}
\node[Nw=4,Nh=4,linecolor=White](generated_vnode10)(26,46){$1$}
\node[Nw=4,Nh=4,linecolor=White](generated_vnode12)(30,54){$1$}
\node[Nw=4,Nh=4,linecolor=White](generated_vnode13)(30,50){$1$}
\node[Nw=4,Nh=4,linecolor=White](generated_vnode14)(30,46){$1$}
\node[Nw=4,Nh=4,linecolor=White](generated_vnode16)(34,54){$1$}
\node[Nw=4,Nh=4,linecolor=White](generated_vnode17)(34,50){$1$}
\node[Nw=4,Nh=4,linecolor=White](generated_vnode18)(34,46){$4$}
\node[Nw=4,Nh=4,linecolor=White](generated_vnode20)(38,54){$1$}
\node[Nw=4,Nh=4,linecolor=White](generated_vnode21)(38,50){$1$}
\node[Nw=4,Nh=4,linecolor=White](generated_vnode22)(38,46){$4$}
\node[Nw=4,Nh=4,linecolor=White](generated_vnode26)(42,50){$1$}
\node[Nw=4,Nh=4,linecolor=White](generated_vnode27)(42,46){$1$}
\node[Nw=4,Nh=4,linecolor=White](generated_vnode30)(46,50){$1$}
\node[Nw=4,Nh=4,linecolor=White](generated_vnode31)(46,46){$1$}
\node[Nw=4,Nh=4,linecolor=White](generated_vnode33)(50,50){$1$}
\node[Nw=4,Nh=4,linecolor=White](generated_vnode34)(50,46){$1$}
\node[Nw=4,Nh=4,linecolor=White](generated_vnode36)(54,46){$1\!1$}
\node[Nw=4,Nh=4,linecolor=White](generated_vnode41)(58,50){$1$}
\node[Nw=4,Nh=4,linecolor=White](generated_vnode42)(58,46){$1$}
\node[Nw=4,Nh=4,fillcolor=Black,Nmr=0](generated_cell1)(22,42){}
\node[Nw=4,Nh=4,fillcolor=Black,Nmr=0](generated_cell9)(54,42){}
\node[Nw=4,Nh=4,fillcolor=Black,Nmr=0](generated_cell11)(18,38){}
\node[Nw=4,Nh=4,fillcolor=Black,Nmr=0](generated_cell12)(22,38){}
\node[Nw=4,Nh=4,fillcolor=Black,Nmr=0](generated_cell13)(26,38){}
\node[Nw=4,Nh=4,fillcolor=Black,Nmr=0](generated_cell14)(30,38){}
\node[Nw=4,Nh=4,fillcolor=Black,Nmr=0](generated_cell15)(34,38){}
\node[Nw=4,Nh=4,fillcolor=Black,Nmr=0](generated_cell16)(38,38){}
\node[Nw=4,Nh=4,fillcolor=Black,Nmr=0](generated_cell17)(42,38){}
\node[Nw=4,Nh=4,fillcolor=Black,Nmr=0](generated_cell18)(46,38){}
\node[Nw=4,Nh=4,fillcolor=Black,Nmr=0](generated_cell19)(50,38){}
\node[Nw=4,Nh=4,fillcolor=Black,Nmr=0](generated_cell20)(54,38){}
\node[Nw=4,Nh=4,fillcolor=Black,Nmr=0](generated_cell21)(58,38){}
\node[Nw=4,Nh=4,fillcolor=Black,Nmr=0](generated_cell23)(22,34){}
\node[Nw=4,Nh=4,fillcolor=Black,Nmr=0](generated_cell31)(54,34){}
\node[Nw=4,Nh=4,fillcolor=Black,Nmr=0](generated_cell34)(22,30){}
\node[Nw=4,Nh=4,fillcolor=Black,Nmr=0](generated_cell42)(54,30){}
\node[Nw=4,Nh=4,fillcolor=Black,Nmr=0](generated_cell45)(22,26){}
\node[Nw=4,Nh=4,fillcolor=Black,Nmr=0](generated_cell53)(54,26){}
\node[Nw=4,Nh=4,fillcolor=White,Nmr=0](generated_cellLetter55)(18,22){\footnotesize${\sf a}$}
\node[Nw=4,Nh=4,fillcolor=Black,Nmr=0](generated_cell56)(22,22){}
\node[Nw=4,Nh=4,fillcolor=Black,Nmr=0](generated_cell64)(54,22){}
\node[Nw=4,Nh=4,fillcolor=Black,Nmr=0](generated_cell67)(22,18){}
\node[Nw=4,Nh=4,fillcolor=Black,Nmr=0](generated_cell75)(54,18){}
\node[Nw=4,Nh=4,fillcolor=Black,Nmr=0](generated_cell78)(22,14){}
\node[Nw=4,Nh=4,fillcolor=Black,Nmr=0](generated_cell86)(54,14){}
\node[Nw=4,Nh=4,fillcolor=Black,Nmr=0](generated_cell89)(22,10){}
\node[Nw=4,Nh=4,fillcolor=Black,Nmr=0](generated_cell97)(54,10){}
\node[Nw=4,Nh=4,fillcolor=Black,Nmr=0](generated_cell99)(18,6){}
\node[Nw=4,Nh=4,fillcolor=Black,Nmr=0](generated_cell100)(22,6){}
\node[Nw=4,Nh=4,fillcolor=Black,Nmr=0](generated_cell101)(26,6){}
\node[Nw=4,Nh=4,fillcolor=Black,Nmr=0](generated_cell102)(30,6){}
\node[Nw=4,Nh=4,fillcolor=Black,Nmr=0](generated_cell103)(34,6){}
\node[Nw=4,Nh=4,fillcolor=Black,Nmr=0](generated_cell104)(38,6){}
\node[Nw=4,Nh=4,fillcolor=Black,Nmr=0](generated_cell105)(42,6){}
\node[Nw=4,Nh=4,fillcolor=Black,Nmr=0](generated_cell106)(46,6){}
\node[Nw=4,Nh=4,fillcolor=Black,Nmr=0](generated_cell107)(50,6){}
\node[Nw=4,Nh=4,fillcolor=Black,Nmr=0](generated_cell108)(54,6){}
\node[Nw=4,Nh=4,fillcolor=Black,Nmr=0](generated_cell109)(58,6){}
\node[Nw=4,Nh=4,fillcolor=Black,Nmr=0](generated_cell111)(22,2){}
\node[Nw=4,Nh=4,fillcolor=White,Nmr=0](generated_cellLetter115)(38,2){\footnotesize${\sf b}$}
\node[Nw=4,Nh=4,fillcolor=Black,Nmr=0](generated_cell119)(54,2){}
\multiput(16, 0)(4,0){12}{\color{DarkGray} \line(0,1){60}}
\multiput(0, 44)(0,-4){12}{\color{DarkGray} \line(1,0){60}}
\end{picture}
    \label{fig:NonogramTURN}}
  \subfigure[Wire, straight]{
    \begin{picture}(60, 60)
\linethickness{0.1mm}
\node[Nw=4,Nh=4,linecolor=White](generated_hnode0)(10,42){$1$}
\node[Nw=4,Nh=4,linecolor=White](generated_hnode1)(14,42){$1$}
\node[Nw=4,Nh=4,linecolor=White](generated_hnode4)(14,38){$1\!1$}
\node[Nw=4,Nh=4,linecolor=White](generated_hnode8)(10,34){$1$}
\node[Nw=4,Nh=4,linecolor=White](generated_hnode9)(14,34){$1$}
\node[Nw=4,Nh=4,linecolor=White](generated_hnode14)(10,30){$1$}
\node[Nw=4,Nh=4,linecolor=White](generated_hnode15)(14,30){$1$}
\node[Nw=4,Nh=4,linecolor=White](generated_hnode16)(6,26){$3$}
\node[Nw=4,Nh=4,linecolor=White](generated_hnode17)(10,26){$2$}
\node[Nw=4,Nh=4,linecolor=White](generated_hnode18)(14,26){$3$}
\node[Nw=4,Nh=4,linecolor=White](generated_hnode20)(6,22){$3$}
\node[Nw=4,Nh=4,linecolor=White](generated_hnode21)(10,22){$2$}
\node[Nw=4,Nh=4,linecolor=White](generated_hnode22)(14,22){$3$}
\node[Nw=4,Nh=4,linecolor=White](generated_hnode26)(10,18){$1$}
\node[Nw=4,Nh=4,linecolor=White](generated_hnode27)(14,18){$1$}
\node[Nw=4,Nh=4,linecolor=White](generated_hnode30)(10,14){$1$}
\node[Nw=4,Nh=4,linecolor=White](generated_hnode31)(14,14){$1$}
\node[Nw=4,Nh=4,linecolor=White](generated_hnode33)(10,10){$1$}
\node[Nw=4,Nh=4,linecolor=White](generated_hnode34)(14,10){$1$}
\node[Nw=4,Nh=4,linecolor=White](generated_hnode36)(14,6){$1\!1$}
\node[Nw=4,Nh=4,linecolor=White](generated_hnode41)(10,2){$1$}
\node[Nw=4,Nh=4,linecolor=White](generated_hnode42)(14,2){$1$}
\node[Nw=4,Nh=4,linecolor=White](generated_vnode0)(18,54){$1$}
\node[Nw=4,Nh=4,linecolor=White](generated_vnode1)(18,50){$1$}
\node[Nw=4,Nh=4,linecolor=White](generated_vnode2)(18,46){$1$}
\node[Nw=4,Nh=4,linecolor=White](generated_vnode4)(22,46){$1\!1$}
\node[Nw=4,Nh=4,linecolor=White](generated_vnode8)(26,54){$1$}
\node[Nw=4,Nh=4,linecolor=White](generated_vnode9)(26,50){$2$}
\node[Nw=4,Nh=4,linecolor=White](generated_vnode10)(26,46){$1$}
\node[Nw=4,Nh=4,linecolor=White](generated_vnode12)(30,54){$1$}
\node[Nw=4,Nh=4,linecolor=White](generated_vnode13)(30,50){$1$}
\node[Nw=4,Nh=4,linecolor=White](generated_vnode14)(30,46){$1$}
\node[Nw=4,Nh=4,linecolor=White](generated_vnode16)(34,54){$1$}
\node[Nw=4,Nh=4,linecolor=White](generated_vnode17)(34,50){$1$}
\node[Nw=4,Nh=4,linecolor=White](generated_vnode18)(34,46){$1$}
\node[Nw=4,Nh=4,linecolor=White](generated_vnode20)(38,54){$1$}
\node[Nw=4,Nh=4,linecolor=White](generated_vnode21)(38,50){$2$}
\node[Nw=4,Nh=4,linecolor=White](generated_vnode22)(38,46){$1$}
\node[Nw=4,Nh=4,linecolor=White](generated_vnode26)(42,54){$1$}
\node[Nw=4,Nh=4,linecolor=White](generated_vnode26)(42,50){$1$}
\node[Nw=4,Nh=4,linecolor=White](generated_vnode27)(42,46){$1$}
\node[Nw=4,Nh=4,linecolor=White](generated_vnode30)(46,54){$1$}
\node[Nw=4,Nh=4,linecolor=White](generated_vnode30)(46,50){$1$}
\node[Nw=4,Nh=4,linecolor=White](generated_vnode31)(46,46){$1$}
\node[Nw=4,Nh=4,linecolor=White](generated_vnode33)(50,54){$1$}
\node[Nw=4,Nh=4,linecolor=White](generated_vnode33)(50,50){$2$}
\node[Nw=4,Nh=4,linecolor=White](generated_vnode34)(50,46){$1$}
\node[Nw=4,Nh=4,linecolor=White](generated_vnode36)(54,46){$1\!1$}
\node[Nw=4,Nh=4,linecolor=White](generated_vnode41)(58,54){$1$}
\node[Nw=4,Nh=4,linecolor=White](generated_vnode41)(58,50){$1$}
\node[Nw=4,Nh=4,linecolor=White](generated_vnode42)(58,46){$1$}
\node[Nw=4,Nh=4,fillcolor=Black,Nmr=0](generated_cell1)(22,42){}
\node[Nw=4,Nh=4,fillcolor=Black,Nmr=0](generated_cell9)(54,42){}
\node[Nw=4,Nh=4,fillcolor=Black,Nmr=0](generated_cell11)(18,38){}
\node[Nw=4,Nh=4,fillcolor=Black,Nmr=0](generated_cell12)(22,38){}
\node[Nw=4,Nh=4,fillcolor=Black,Nmr=0](generated_cell13)(26,38){}
\node[Nw=4,Nh=4,fillcolor=Black,Nmr=0](generated_cell14)(30,38){}
\node[Nw=4,Nh=4,fillcolor=Black,Nmr=0](generated_cell15)(34,38){}
\node[Nw=4,Nh=4,fillcolor=Black,Nmr=0](generated_cell16)(38,38){}
\node[Nw=4,Nh=4,fillcolor=Black,Nmr=0](generated_cell17)(42,38){}
\node[Nw=4,Nh=4,fillcolor=Black,Nmr=0](generated_cell18)(46,38){}
\node[Nw=4,Nh=4,fillcolor=Black,Nmr=0](generated_cell19)(50,38){}
\node[Nw=4,Nh=4,fillcolor=Black,Nmr=0](generated_cell20)(54,38){}
\node[Nw=4,Nh=4,fillcolor=Black,Nmr=0](generated_cell21)(58,38){}
\node[Nw=4,Nh=4,fillcolor=Black,Nmr=0](generated_cell23)(22,34){}
\node[Nw=4,Nh=4,fillcolor=Black,Nmr=0](generated_cell31)(54,34){}
\node[Nw=4,Nh=4,fillcolor=Black,Nmr=0](generated_cell34)(22,30){}
\node[Nw=4,Nh=4,fillcolor=Black,Nmr=0](generated_cell42)(54,30){}
\node[Nw=4,Nh=4,fillcolor=Black,Nmr=0](generated_cell45)(22,26){}
\node[Nw=4,Nh=4,fillcolor=Black,Nmr=0](generated_cell53)(54,26){}
\node[Nw=4,Nh=4,fillcolor=White,Nmr=0](generated_cellLetter55)(18,22){\footnotesize${\sf a}$}
\node[Nw=4,Nh=4,fillcolor=Black,Nmr=0](generated_cell56)(22,22){}
\node[Nw=4,Nh=4,fillcolor=Black,Nmr=0](generated_cell64)(54,22){}
\node[Nw=4,Nh=4,fillcolor=White,Nmr=0](generated_cellLetter65)(58,22){\footnotesize${\sf c}$}
\node[Nw=4,Nh=4,fillcolor=Black,Nmr=0](generated_cell67)(22,18){}
\node[Nw=4,Nh=4,fillcolor=Black,Nmr=0](generated_cell75)(54,18){}
\node[Nw=4,Nh=4,fillcolor=Black,Nmr=0](generated_cell78)(22,14){}
\node[Nw=4,Nh=4,fillcolor=Black,Nmr=0](generated_cell86)(54,14){}
\node[Nw=4,Nh=4,fillcolor=Black,Nmr=0](generated_cell89)(22,10){}
\node[Nw=4,Nh=4,fillcolor=Black,Nmr=0](generated_cell97)(54,10){}
\node[Nw=4,Nh=4,fillcolor=Black,Nmr=0](generated_cell99)(18,6){}
\node[Nw=4,Nh=4,fillcolor=Black,Nmr=0](generated_cell100)(22,6){}
\node[Nw=4,Nh=4,fillcolor=Black,Nmr=0](generated_cell101)(26,6){}
\node[Nw=4,Nh=4,fillcolor=Black,Nmr=0](generated_cell102)(30,6){}
\node[Nw=4,Nh=4,fillcolor=Black,Nmr=0](generated_cell103)(34,6){}
\node[Nw=4,Nh=4,fillcolor=Black,Nmr=0](generated_cell104)(38,6){}
\node[Nw=4,Nh=4,fillcolor=Black,Nmr=0](generated_cell105)(42,6){}
\node[Nw=4,Nh=4,fillcolor=Black,Nmr=0](generated_cell106)(46,6){}
\node[Nw=4,Nh=4,fillcolor=Black,Nmr=0](generated_cell107)(50,6){}
\node[Nw=4,Nh=4,fillcolor=Black,Nmr=0](generated_cell108)(54,6){}
\node[Nw=4,Nh=4,fillcolor=Black,Nmr=0](generated_cell109)(58,6){}
\node[Nw=4,Nh=4,fillcolor=Black,Nmr=0](generated_cell111)(22,2){}
\node[Nw=4,Nh=4,fillcolor=Black,Nmr=0](generated_cell119)(54,2){}
\multiput(16, 0)(4,0){12}{\color{DarkGray} \line(0,1){60}}
\multiput(0, 44)(0,-4){12}{\color{DarkGray} \line(1,0){60}}
\end{picture}
    \label{fig:NonogramSTRAIGHT}}
  \caption[Nonogram gadgets]{Nonogram gadgets. }
  \label{fig:NonogramGadgets}
\end{figure}

The three cells marked with ${\sf a}$, ${\sf b}$ and ${\sf c}$ correspond with the edges; for
the (initially undirected) AND gadget ${\sf a}$ and ${\sf b}$ correspond with the ``red'' ones.
Given the values for the cells corresponding to the edges, the gadgets are within the so-called ``simple'' Nonogram class, and can be easily solved. For a definition and solving algorithm of the ``simple'' nonograms, the reader is referred to~\citeaby{Batenburg2012}.
In any case, the solutions can be easily verified.
They are uniquely characterized
by $\sf{ (a,b,c) } \in \{(0,0,1),(1,0,1),(0,1,1),(1,1,0),(1,1,1)\}$ (AND gadget) and
$\sf{ (a,b,c) } \in \{(1,0,0),(0,1,0),(0,0,1),(1,0,1),(0,1,1),(1,1,0),(1,1,1)\}$ (OR gadget),
implying that the gadgets indeed perform as desired.
If $\sf{ (a,b,c) } = (1,0,1)$, the OR gadget has two solutions,
making use of a so-called switching component.

Now we have:
\begin{theorem}
\textsc{Nonogram} is NP-complete.
\end{theorem}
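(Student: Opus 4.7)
Membership in NP is immediate: a candidate coloring of the $m\times n$ grid has size polynomial in the input, and for each row and column the description can be checked against the cell pattern in linear time. For hardness we give a polynomial-time reduction from Constraint Graph Satisfiability, which is NP-complete by Theorem~5.4 of~\citeaby{Hearn2009} and uses only initially undirected planar AND and OR vertices.

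The plan is to embed a given planar AND/OR constraint graph $G$ into the grid using the global layout of Figure~\ref{fig:NonogramGlobalLayout}: we place a planar rectilinear drawing of $G$ on a coarse grid, then replace each vertex of $G$ by the corresponding gadget of Figure~\ref{fig:NonogramGadgets} and each edge by a chain of straight-wire and turn-wire gadgets, all living inside $G$-sized ($G=7$) blocks that are pairwise separated by $D$ columns and rows of ``filler'' separation lines whose descriptions $m$ and $n$ force them to be completely black. These separation lines, together with the outer black frame of each gadget shown in Figure~\ref{fig:nonogramGadgetOverall}, make the gadgets into independent local constraint satisfaction problems connected only through the three marked boundary cells $\sf{a}$, $\sf{b}$, $\sf{c}$.

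The key observation is the signal-propagation mechanism of Figure~\ref{fig:NonogramSignal}: across a block of $D$ separation lines, the color of an edge-cell in one subnonogram forces the complementary color of the corresponding edge-cell in the orthogonally adjacent subnonogram. Thus setting a cell $\sf{a}$, $\sf{b}$ or $\sf{c}$ black in a gadget encodes ``the corresponding edge points into the vertex,'' and the enumeration of admissible triples listed below Figure~\ref{fig:NonogramGadgets},
\[
\{(0,0,1),(1,0,1),(0,1,1),(1,1,0),(1,1,1)\}\ \text{(AND)},\qquad
\{(1,0,0),(0,1,0),(0,0,1),(1,0,1),(0,1,1),(1,1,0),(1,1,1)\}\ \text{(OR)},
\]
exactly matches the inflow condition: the sum of weights of inward edges is at least $2$, where the two ``red'' cells $\sf{a}$ and $\sf{b}$ of an AND gadget count as weight $1$ and the lone ``blue'' cell $\sf{c}$ counts as weight $2$, while all three cells of an OR gadget count as weight $2$. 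The wire gadgets simply transmit the boolean value between their two active ports, as is verified by the same enumeration. Composing all gadgets, a global coloring that satisfies every line description exists if and only if there is a consistent orientation of every edge of $G$ whose local pattern at every vertex is admissible, i.e., if and only if $G$ is satisfiable.

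The construction is clearly polynomial: the grid has $O(|V(G)|)$ gadgets, each of constant size, plus $O(|V(G)|)$ separation rows and columns. The main obstacle is to check carefully that each of the four gadgets, when considered with its surrounding separation frame and the three boundary-cell assignments, is ``simple'' in the sense of~\citeaby{Batenburg2012} and admits exactly the claimed sets of completions; this is a finite case analysis that can be done by the simple-Nonogram solving algorithm and is what the enumerations above record. Given this verification, the reduction is correct and \textsc{Nonogram} is NP-complete.\hfill$\Box$
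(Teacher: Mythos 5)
Your proposal is correct and follows essentially the same route as the paper: membership in NP by direct verification, and hardness via a reduction from Constraint Graph Satisfiability that embeds a planar AND/OR constraint graph into the separated-subnonogram layout using the vertex and wire gadgets, with black/white edge-cells encoding edge orientation and the enumerated admissible triples realizing the inflow constraints. Your explicit matching of the triples to the weighted inflow condition is a nice elaboration of what the paper leaves as ``the gadgets indeed perform as desired,'' and the only detail the paper adds that you gloss over is the citation justifying that a degree-3 planar graph admits an orthogonal grid drawing of polynomial (linear side-length) size.
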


\begin{proof}
  We can simulate a planar constraint graph with only AND and OR nodes
  on a Nonogram grid using the global layout of Figure~\ref{fig:NonogramGlobalLayout} and the two top gadgets shown in Figure~\ref{fig:NonogramGadgets}. 
The two bottom gadgets from Figure~\ref{fig:NonogramGadgets} can be used for wires, in a straight line or as a corner. 
  Now the arrows can be inserted in a legal way if and only if the resulting Nonogram can be solved.
  
\citeaby{Battista1994} have proven that a graph with maximal degree 3 can always be stored in a square grid of width $v$, where $v$ is the number of vertices contained by the graph. This ensures that the number of rows and columns used in our reduction is bounded linearly by the number of vertices in the corresponding NCL graph. 



\textsc{Nonogram} is clearly in NP, as any potential solution can be verified in polynomial time.\hfill$\Box$
\end{proof}

\section{Dou Shou Qi}\label{sec:doushouqi}

\emph{Dou Shou Qi} (meaning: ``Game of Fighting
Animals''), as described by~\citeaby{Pritchard2007}, is a Chinese board game. In the Western world it is often
called Jungle, The Jungle Game, Jungle Chess, or Animal Chess.
Dou Shou Qi is a two-player abstract strategy game and it contains
some elements from Chess and Stratego as well as some other chess-like
Chinese games (e.g., Banqi). Its origins are not entirely clear, but
it seems that it evolved rather recently (around the 1900s).
It has been suggested by some that the game often ends in a draw,
but preliminary results from~\citeaby{Rijn2013} show a remarkably low
percentage of draws.

The game Dou Shou Qi is not extensively studied in literature.
In~\citeaby{Burnett2010}, a definition of the game is given and an attempt is made
to characterize certain local properties of subproblems that occur
when analyzing the game. These so-called loosely coupled subproblems
can be analyzed separately in contrast to analyzing the problem as a
whole, resulting in a possible speed-up in the overall analysis. A
first complexity result has been obtained by~\citeaby{Rijn2013}.
Dou Shou Qi is proven PSPACE-hard by reduction from logic circuits.
Here, we will prove Dou Shou Qi to be PSPACE-hard by reduction from
Planar Bounded 2CL based on the reduction given by~\citeaby{Rijn2012}.

We cannot prove Dou Shou Qi to be PSPACE-complete; as an unbounded
two-player game it is probably not in PSPACE. 
Dou Shou Qi is clearly in EXPTIME --- like Chess.

\subsection{Definition}

Dou Shou Qi is played on a rectangular board consisting of
$9\times 7$~squares, see Figure~\ref{fig:doushouqi_board}. There are
several different kinds of squares. The \emph{dens} (D), one for each
player, are located in the center of the first and last row and are
protected on all sides by \emph{traps} (T). Furthermore, there are two
bodies of water (W), while the remaining squares are ordinary land
squares.

Each players has eight different pieces representing different
animals with a respective \emph{strength}, according to which they can
\emph{capture} some of the opponent's pieces. Pieces can only capture
a piece of equal or lower strength, with the exception of the weakest
piece which is able to capture the strongest piece. The strength of
the animals from weak to strong is: 1~rat, 2~cat, 3~wolf, 4~dog,
5~panther, 6~tiger, 7~lion, 8~elephant. The initial placement of the
pieces is fixed, see Figure~\ref{fig:doushouqi_board}. Players
alternate turns with white moving first. Each turn a piece must be
moved either one square horizontally or vertically. Pieces are
forbidden to enter their own den and are usually blocked by water. The
rat is the only piece able to move through the water where it is also
capable of capturing, i.e., the enemy rat, but it is forbidden to
capture an elephant while attacking from the water. Lions and tigers
are able to leap over water either horizontally or vertically, but
they are blocked by any rat on the intermediate water squares.

\begin{figure}[!ht]
\begin{center}
\begin{picture}(56, 72)
\linethickness{0.075mm}
\multiput(0, 0)(8,0){8}{\line(0, 1){72}}
\multiput(0, 0)(0,8){10}{\line(1, 0){56}}
\put(4,68){\node[Nw=6,Nh=6,fillcolor=Black,linecolor=Black](00)(0,0){\textcolor{White}{\textbf{7}}} }
\put(20,68){\node[Nw=8,Nh=8,Nmr=0,fillcolor=Black,linecolor=DarkGray](02)(0,0){\textcolor{White}{\textbf{T}}} }
\put(28,68){\node[Nw=8,Nh=8,Nmr=0,fillcolor=Black,linecolor=DarkGray](03)(0,0){\textcolor{White}{\textbf{D}}} }
\put(36,68){\node[Nw=8,Nh=8,Nmr=0,fillcolor=Black,linecolor=DarkGray](04)(0,0){\textcolor{White}{\textbf{T}}} }
\put(52,68){\node[Nw=6,Nh=6,fillcolor=Black,linecolor=Black](06)(0,0){\textcolor{White}{\textbf{6}}} }
\put(12,60){\node[Nw=6,Nh=6,fillcolor=Black,linecolor=Black](91)(0,0){\textcolor{White}{\textbf{4}}} }
\put(28,60){\node[Nw=8,Nh=8,Nmr=0,fillcolor=Black,linecolor=DarkGray](93)(0,0){\textcolor{White}{\textbf{T}}} }
\put(44,60){\node[Nw=6,Nh=6,fillcolor=Black,linecolor=Black](95)(0,0){\textcolor{White}{\textbf{2}}} }
\put(4,52){\node[Nw=6,Nh=6,fillcolor=Black,linecolor=Black](180)(0,0){\textcolor{White}{\textbf{1}}} }
\put(20,52){\node[Nw=6,Nh=6,fillcolor=Black,linecolor=Black](182)(0,0){\textcolor{White}{\textbf{5}}} }
\put(36,52){\node[Nw=6,Nh=6,fillcolor=Black,linecolor=Black](184)(0,0){\textcolor{White}{\textbf{3}}} }
\put(52,52){\node[Nw=6,Nh=6,fillcolor=Black,linecolor=Black](186)(0,0){\textcolor{White}{\textbf{8}}} }
\put(12,44){\node[Nw=6,Nh=6,fillcolor=White,linecolor=White](271)(0,0){{\textbf{W}}} }
\put(20,44){\node[Nw=6,Nh=6,fillcolor=White,linecolor=White](272)(0,0){{\textbf{W}}} }
\put(36,44){\node[Nw=6,Nh=6,fillcolor=White,linecolor=White](274)(0,0){{\textbf{W}}} }
\put(44,44){\node[Nw=6,Nh=6,fillcolor=White,linecolor=White](275)(0,0){{\textbf{W}}} }
\put(12,36){\node[Nw=6,Nh=6,fillcolor=White,linecolor=White](361)(0,0){{\textbf{W}}} }
\put(20,36){\node[Nw=6,Nh=6,fillcolor=White,linecolor=White](362)(0,0){{\textbf{W}}} }
\put(36,36){\node[Nw=6,Nh=6,fillcolor=White,linecolor=White](364)(0,0){{\textbf{W}}} }
\put(44,36){\node[Nw=6,Nh=6,fillcolor=White,linecolor=White](365)(0,0){{\textbf{W}}} }
\put(12,28){\node[Nw=6,Nh=6,fillcolor=White,linecolor=White](451)(0,0){{\textbf{W}}} }
\put(20,28){\node[Nw=6,Nh=6,fillcolor=White,linecolor=White](452)(0,0){{\textbf{W}}} }
\put(36,28){\node[Nw=6,Nh=6,fillcolor=White,linecolor=White](454)(0,0){{\textbf{W}}} }
\put(44,28){\node[Nw=6,Nh=6,fillcolor=White,linecolor=White](455)(0,0){{\textbf{W}}} }
\put(4,20){\node[Nw=6,Nh=6,fillcolor=White,linecolor=Black](540)(0,0){{\textbf{8}}} }
\put(20,20){\node[Nw=6,Nh=6,fillcolor=White,linecolor=Black](542)(0,0){{\textbf{3}}} }
\put(36,20){\node[Nw=6,Nh=6,fillcolor=White,linecolor=Black](544)(0,0){{\textbf{5}}} }
\put(52,20){\node[Nw=6,Nh=6,fillcolor=White,linecolor=Black](546)(0,0){{\textbf{1}}} }
\put(12,12){\node[Nw=6,Nh=6,fillcolor=White,linecolor=Black](631)(0,0){{\textbf{2}}} }
\put(28,12){\node[Nw=6,Nh=6,fillcolor=White,linecolor=White](633)(0,0){{\textbf{T}}} }
\put(44,12){\node[Nw=6,Nh=6,fillcolor=White,linecolor=Black](635)(0,0){{\textbf{4}}} }
\put(4,4){\node[Nw=6,Nh=6,fillcolor=White,linecolor=Black](720)(0,0){{\textbf{6}}} }
\put(20,4){\node[Nw=6,Nh=6,fillcolor=White,linecolor=White](722)(0,0){{\textbf{T}}} }
\put(28,4){\node[Nw=6,Nh=6,fillcolor=White,linecolor=White](723)(0,0){{\textbf{D}}} }
\put(36,4){\node[Nw=6,Nh=6,fillcolor=White,linecolor=White](724)(0,0){{\textbf{T}}} }
\put(52,4){\node[Nw=6,Nh=6,fillcolor=White,linecolor=Black](726)(0,0){{\textbf{7}}} }
\end{picture}
\caption[Dou Shou Qi game board]{A schematic Dou Shou Qi game board
showing the initial position.}
\label{fig:doushouqi_board}
\end{center}
\end{figure}

Pieces can be trapped by the traps surrounding the opponent's den:
their strength is effectively reduced to zero, meaning that they can
be captured by any enemy piece. The objective of the game is to place
a piece in the opponents den or to eliminate all of the opponent's
pieces. A stalemate position is declared a draw.

\subsection{PSPACE-hardness}

The Bounded 2CL graph will be simulated on a $m \times n$~board, where
both players have $k$ pieces. Whether a natural generalization of the
game would imply that the $k$ pieces all have a strength from the
interval~$[1,8]$ or a distinct strength from the interval~$[1,k]$ is
open for discussion. In our reduction all pieces have a strength from
the interval~$[2,5]$, excluding all pieces with special capabilities.
The original game board contains several properties, i.e., clustered
water squares, narrow paths between the water, traps surrounding the
dens, which are symmetrical and highly regular. Which of these
properties should be preserved on a generalized game board is open for
debate, however in our reduction we took the liberty of using water
squares and traps freely in the gadgets.

We will show a complexity proof for the decision problem 
\textsc{Dou Shou Qi}: given a Dou Shou Qi position, does the player on 
turn have a forced win?


\begin{figure}[!ht]
\begin{center}
\subfigure[AND gadget]{
\begin{picture}(25, 45)
\linethickness{0.075mm}
\multiput(0, 0)(5,0){6}{\line(0, 1){45}}
\multiput(0, 0)(0,5){10}{\line(1, 0){25}}
\put(2.5,42.5){\node[Nw=3,Nh=3,fillcolor=White,linecolor=White](00)(0,0){{\textbf{W}}} }
\put(7.5,42.5){\node[Nw=3,Nh=3,fillcolor=White,linecolor=White](01)(0,0){{\textbf{W}}} }
\put(17.5,42.5){\node[Nw=3,Nh=3,fillcolor=White,linecolor=White](03)(0,0){{\textbf{W}}} }
\put(22.5,42.5){\node[Nw=3,Nh=3,fillcolor=White,linecolor=White](04)(0,0){{\textbf{W}}} }
\put(2.5,37.5){\node[Nw=3,Nh=3,fillcolor=White,linecolor=White](90)(0,0){{\textbf{W}}} }
\put(12.5,37.5){\node[Nw=3,Nh=3,fillcolor=White,linecolor=White](90)(0,0){{\textbf{T}}} }
\put(22.5,37.5){\node[Nw=3,Nh=3,fillcolor=White,linecolor=White](94)(0,0){{\textbf{W}}} }
\put(2.5,32.5){\node[Nw=3,Nh=3,fillcolor=White,linecolor=White](180)(0,0){{\textbf{W}}} }
\put(7.5,32.5){\node[Nw=3,Nh=3,fillcolor=White,linecolor=White](90)(0,0){{\textbf{T}}} }
\put(12.5,32.5){\node[Nw=3,Nh=3,fillcolor=White,linecolor=Black](182)(0,0){{\textbf{2}}} }
\put(17.5,32.5){\node[Nw=3,Nh=3,fillcolor=White,linecolor=White](90)(0,0){{\textbf{T}}} }
\put(22.5,32.5){\node[Nw=3,Nh=3,fillcolor=White,linecolor=White](184)(0,0){{\textbf{W}}} }
\put(2.5,27.5){\node[Nw=3,Nh=3,fillcolor=White,linecolor=White](270)(0,0){{\textbf{W}}} }
\put(12.5,27.5){\node[Nw=3,Nh=3,fillcolor=White,linecolor=White](90)(0,0){{\textbf{T}}} }
\put(22.5,27.5){\node[Nw=3,Nh=3,fillcolor=White,linecolor=White](274)(0,0){{\textbf{W}}} }
\put(2.5,22.5){\node[Nw=3,Nh=3,fillcolor=White,linecolor=White](360)(0,0){{\textbf{W}}} }
\put(12.5,22.5){\node[Nw=3,Nh=3,fillcolor=Black,linecolor=Black](362)(0,0){\textcolor{White}{\textbf{5}}} }
\put(22.5,22.5){\node[Nw=3,Nh=3,fillcolor=White,linecolor=White](364)(0,0){{\textbf{W}}} }
\put(2.5,17.5){\node[Nw=3,Nh=3,fillcolor=White,linecolor=White](450)(0,0){{\textbf{W}}} }
\put(12.5,17.5){\node[Nw=3,Nh=3,fillcolor=White,linecolor=White](90)(0,0){{\textbf{T}}} }
\put(22.5,17.5){\node[Nw=3,Nh=3,fillcolor=White,linecolor=White](454)(0,0){{\textbf{W}}} }
\put(2.5,12.5){\node[Nw=3,Nh=3,fillcolor=White,linecolor=White](540)(0,0){{\textbf{W}}} }
\put(7.5,12.5){\node[Nw=3,Nh=3,fillcolor=White,linecolor=White](90)(0,0){{\textbf{T}}} }
\put(12.5,12.5){\node[Nw=3,Nh=3,fillcolor=White,linecolor=Black](542)(0,0){{\textbf{2}}} }
\put(17.5,12.5){\node[Nw=3,Nh=3,fillcolor=White,linecolor=White](90)(0,0){{\textbf{T}}} }
\put(22.5,12.5){\node[Nw=3,Nh=3,fillcolor=White,linecolor=White](544)(0,0){{\textbf{W}}} }
\put(2.5,7.5){\node[Nw=3,Nh=3,fillcolor=White,linecolor=White](630)(0,0){{\textbf{W}}} }
\put(12.5,7.5){\node[Nw=3,Nh=3,fillcolor=White,linecolor=White](90)(0,0){{\textbf{T}}} }
\put(22.5,7.5){\node[Nw=3,Nh=3,fillcolor=White,linecolor=White](634)(0,0){{\textbf{W}}} }
\put(2.5,2.5){\node[Nw=3,Nh=3,fillcolor=White,linecolor=White](720)(0,0){{\textbf{W}}} }
\put(12.5,2.5){\node[Nw=3,Nh=3,fillcolor=White,linecolor=White](722)(0,0){{\textbf{W}}} }
\put(22.5,2.5){\node[Nw=3,Nh=3,fillcolor=White,linecolor=White](724)(0,0){{\textbf{W}}} }
\end{picture}
\label{fig:doushouqi_and}}
\subfigure[OR gadget]{
\begin{picture}(25, 45)
\linethickness{0.075mm}
\multiput(0, 0)(5,0){6}{\line(0, 1){45}}
\multiput(0, 0)(0,5){10}{\line(1, 0){25}}
\put(2.5,42.5){\node[Nw=3,Nh=3,fillcolor=White,linecolor=White](00)(0,0){{\textbf{W}}} }
\put(7.5,42.5){\node[Nw=3,Nh=3,fillcolor=White,linecolor=White](01)(0,0){{\textbf{W}}} }
\put(17.5,42.5){\node[Nw=3,Nh=3,fillcolor=White,linecolor=White](03)(0,0){{\textbf{W}}} }
\put(22.5,42.5){\node[Nw=3,Nh=3,fillcolor=White,linecolor=White](04)(0,0){{\textbf{W}}} }
\put(2.5,37.5){\node[Nw=3,Nh=3,fillcolor=White,linecolor=White](90)(0,0){{\textbf{W}}} }
\put(12.5,37.5){\node[Nw=3,Nh=3,fillcolor=White,linecolor=White](90)(0,0){{\textbf{T}}} }
\put(22.5,37.5){\node[Nw=3,Nh=3,fillcolor=White,linecolor=White](94)(0,0){{\textbf{W}}} }
\put(2.5,32.5){\node[Nw=3,Nh=3,fillcolor=White,linecolor=White](180)(0,0){{\textbf{W}}} }
\put(7.5,32.5){\node[Nw=3,Nh=3,fillcolor=White,linecolor=White](90)(0,0){{\textbf{T}}} }
\put(12.5,32.5){\node[Nw=3,Nh=3,fillcolor=White,linecolor=Black](182)(0,0){{\textbf{2}}} }
\put(17.5,32.5){\node[Nw=3,Nh=3,fillcolor=White,linecolor=White](90)(0,0){{\textbf{T}}} }
\put(22.5,32.5){\node[Nw=3,Nh=3,fillcolor=White,linecolor=White](184)(0,0){{\textbf{W}}} }
\put(2.5,27.5){\node[Nw=3,Nh=3,fillcolor=White,linecolor=White](270)(0,0){{\textbf{W}}} }
\put(12.5,27.5){\node[Nw=3,Nh=3,fillcolor=White,linecolor=White](90)(0,0){{\textbf{T}}} }
\put(22.5,27.5){\node[Nw=3,Nh=3,fillcolor=White,linecolor=White](274)(0,0){{\textbf{W}}} }
\put(2.5,22.5){\node[Nw=3,Nh=3,fillcolor=White,linecolor=White](360)(0,0){{\textbf{W}}} }
\put(12.5,22.5){\node[Nw=3,Nh=3,fillcolor=Black,linecolor=Black](362)(0,0){\textcolor{White}{\textbf{5}}} }
\put(22.5,22.5){\node[Nw=3,Nh=3,fillcolor=White,linecolor=White](364)(0,0){{\textbf{W}}} }
\put(2.5,17.5){\node[Nw=3,Nh=3,fillcolor=White,linecolor=White](450)(0,0){{\textbf{W}}} }
\put(12.5,17.5){\node[Nw=3,Nh=3,fillcolor=White,linecolor=White](90)(0,0){{\textbf{T}}} }
\put(22.5,17.5){\node[Nw=3,Nh=3,fillcolor=White,linecolor=White](454)(0,0){{\textbf{W}}} }
\put(2.5,12.5){\node[Nw=3,Nh=3,fillcolor=White,linecolor=White](540)(0,0){{\textbf{W}}} }
\put(7.5,12.5){\node[Nw=3,Nh=3,fillcolor=White,linecolor=White](90)(0,0){{\textbf{T}}} }
\put(12.5,12.5){\node[Nw=3,Nh=3,fillcolor=White,linecolor=Black](542)(0,0){{\textbf{2}}} }
\put(17.5,12.5){\node[Nw=3,Nh=3,fillcolor=White,linecolor=White](90)(0,0){{\textbf{T}}} }
\put(22.5,12.5){\node[Nw=3,Nh=3,fillcolor=White,linecolor=White](544)(0,0){{\textbf{W}}} }
\put(2.5,7.5){\node[Nw=3,Nh=3,fillcolor=White,linecolor=White](630)(0,0){{\textbf{W}}} }
\put(12.5,7.5){\node[Nw=3,Nh=3,fillcolor=White,linecolor=White](90)(0,0){{\textbf{T}}} }
\put(17.5,7.5){\node[Nw=3,Nh=3,fillcolor=White,linecolor=Black](633)(0,0){{\textbf{4}}} }
\put(22.5,7.5){\node[Nw=3,Nh=3,fillcolor=White,linecolor=White](634)(0,0){{\textbf{W}}} }
\put(2.5,2.5){\node[Nw=3,Nh=3,fillcolor=White,linecolor=White](720)(0,0){{\textbf{W}}} }
\put(12.5,2.5){\node[Nw=3,Nh=3,fillcolor=White,linecolor=White](722)(0,0){{\textbf{W}}} }
\put(22.5,2.5){\node[Nw=3,Nh=3,fillcolor=White,linecolor=White](724)(0,0){{\textbf{W}}} }
\end{picture}
\label{fig:doushouqi_or}}
\subfigure[FANOUT gadget]{
\begin{picture}(30, 25)
\linethickness{0.075mm}
\multiput(0, 0)(5,0){7}{\line(0, 1){25}}
\multiput(0, 0)(0,5){6}{\line(1, 0){30}}
\put(2.5,22.5){\node[Nw=3,Nh=3,fillcolor=White,linecolor=White](00)(0,0){{\textbf{W}}} }
\put(12.5,22.5){\node[Nw=3,Nh=3,fillcolor=White,linecolor=White](02)(0,0){{\textbf{W}}} }
\put(17.5,22.5){\node[Nw=3,Nh=3,fillcolor=White,linecolor=White](03)(0,0){{\textbf{W}}} }
\put(27.5,22.5){\node[Nw=3,Nh=3,fillcolor=White,linecolor=White](05)(0,0){{\textbf{W}}} }
\put(2.5,17.5){\node[Nw=3,Nh=3,fillcolor=White,linecolor=White](50)(0,0){{\textbf{W}}} }
\put(27.5,17.5){\node[Nw=3,Nh=3,fillcolor=White,linecolor=White](55)(0,0){{\textbf{W}}} }
\put(2.5,12.5){\node[Nw=3,Nh=3,fillcolor=White,linecolor=White](100)(0,0){{\textbf{W}}} }
\put(7.5,12.5){\node[Nw=5,Nh=5,Nmr=0,fillcolor=Black,linecolor=Black](101)(0,0){\textcolor{White}{\textbf{T}}} }
\put(12.5,12.5){\node[Nw=3,Nh=3,fillcolor=Black,linecolor=Black](102)(0,0){\textcolor{White}{\textbf{3}}} }
\put(27.5,12.5){\node[Nw=3,Nh=3,fillcolor=White,linecolor=White](105)(0,0){{\textbf{W}}} }
\put(2.5,7.5){\node[Nw=3,Nh=3,fillcolor=White,linecolor=White](150)(0,0){{\textbf{W}}} }
\put(7.5,7.5){\node[Nw=3,Nh=3,fillcolor=White,linecolor=Black](151)(0,0){{\textbf{4}}} }
\put(12.5,7.5){\node[Nw=5,Nh=5,Nmr=0,fillcolor=Black,linecolor=Black](101)(0,0){\textcolor{White}{\textbf{T}}} }
\put(27.5,7.5){\node[Nw=3,Nh=3,fillcolor=White,linecolor=White](155)(0,0){{\textbf{W}}} }
\put(2.5,2.5){\node[Nw=3,Nh=3,fillcolor=White,linecolor=White](200)(0,0){{\textbf{W}}} }
\put(7.5,2.5){\node[Nw=3,Nh=3,fillcolor=White,linecolor=White](201)(0,0){{\textbf{W}}} }
\put(12.5,2.5){\node[Nw=3,Nh=3,fillcolor=White,linecolor=White](202)(0,0){{\textbf{W}}} }
\put(22.5,2.5){\node[Nw=3,Nh=3,fillcolor=White,linecolor=White](204)(0,0){{\textbf{W}}} }
\put(27.5,2.5){\node[Nw=3,Nh=3,fillcolor=White,linecolor=White](205)(0,0){{\textbf{W}}} }
\end{picture}
\label{fig:doushouqi_fanout}}
\subfigure[CHOICE gadget]{
\begin{picture}(25, 15)
\linethickness{0.075mm}
\multiput(0, 0)(5,0){6}{\line(0, 1){15}}
\multiput(0, 0)(0,5){4}{\line(1, 0){25}}
\put(2.5,12.5){\node[Nw=3,Nh=3,fillcolor=White,linecolor=White](00)(0,0){{\textbf{W}}} }
\put(12.5,12.5){\node[Nw=3,Nh=3,fillcolor=White,linecolor=White](02)(0,0){{\textbf{W}}} }
\put(22.5,12.5){\node[Nw=3,Nh=3,fillcolor=White,linecolor=White](04)(0,0){{\textbf{W}}} }
\put(2.5,7.5){\node[Nw=3,Nh=3,fillcolor=White,linecolor=White](30)(0,0){{\textbf{W}}} }
\put(22.5,7.5){\node[Nw=3,Nh=3,fillcolor=White,linecolor=White](34)(0,0){{\textbf{W}}} }
\put(2.5,2.5){\node[Nw=3,Nh=3,fillcolor=White,linecolor=White](60)(0,0){{\textbf{W}}} }
\put(7.5,2.5){\node[Nw=3,Nh=3,fillcolor=White,linecolor=White](61)(0,0){{\textbf{W}}} }
\put(17.5,2.5){\node[Nw=3,Nh=3,fillcolor=White,linecolor=White](63)(0,0){{\textbf{W}}} }
\put(22.5,2.5){\node[Nw=3,Nh=3,fillcolor=White,linecolor=White](64)(0,0){{\textbf{W}}} }
\end{picture}
\label{fig:doushouqi_coice}}
\subfigure[VARIABLE gadget]{
\begin{picture}(25, 25)
\linethickness{0.075mm}
\multiput(0, 0)(5,0){6}{\line(0, 1){25}}
\multiput(0, 0)(0,5){6}{\line(1, 0){25}}
\put(2.5,22.5){\node[Nw=3,Nh=3,fillcolor=White,linecolor=White](00)(0,0){{\textbf{W}}} }
\put(7.5,22.5){\node[Nw=3,Nh=3,fillcolor=White,linecolor=White](01)(0,0){{\textbf{W}}} }
\put(17.5,22.5){\node[Nw=3,Nh=3,fillcolor=White,linecolor=White](03)(0,0){{\textbf{W}}} }
\put(22.5,22.5){\node[Nw=3,Nh=3,fillcolor=White,linecolor=White](04)(0,0){{\textbf{W}}} }
\put(2.5,17.5){\node[Nw=3,Nh=3,fillcolor=White,linecolor=White](50)(0,0){{\textbf{W}}} }
\put(22.5,17.5){\node[Nw=3,Nh=3,fillcolor=White,linecolor=White](54)(0,0){{\textbf{W}}} }
\put(2.5,12.5){\node[Nw=3,Nh=3,fillcolor=White,linecolor=White](100)(0,0){{\textbf{W}}} }
\put(12.5,12.5){\node[Nw=3,Nh=3,fillcolor=Black,linecolor=Black](102)(0,0){\textcolor{White}{\textbf{4}}} }
\put(22.5,12.5){\node[Nw=3,Nh=3,fillcolor=White,linecolor=White](104)(0,0){{\textbf{W}}} }
\put(2.5,7.5){\node[Nw=3,Nh=3,fillcolor=White,linecolor=White](150)(0,0){{\textbf{W}}} }
\put(12.5,7.5){\node[Nw=3,Nh=3,fillcolor=White,linecolor=Black](152)(0,0){{\textbf{4}}} }
\put(22.5,7.5){\node[Nw=3,Nh=3,fillcolor=White,linecolor=White](154)(0,0){{\textbf{W}}} }
\put(2.5,2.5){\node[Nw=3,Nh=3,fillcolor=White,linecolor=White](200)(0,0){{\textbf{W}}} }
\put(7.5,2.5){\node[Nw=3,Nh=3,fillcolor=White,linecolor=White](201)(0,0){{\textbf{W}}} }
\put(17.5,2.5){\node[Nw=3,Nh=3,fillcolor=White,linecolor=White](203)(0,0){{\textbf{W}}} }
\put(22.5,2.5){\node[Nw=3,Nh=3,fillcolor=White,linecolor=White](204)(0,0){{\textbf{W}}} }
\end{picture}
\label{fig:doushouqi_variable}}
\caption{Dou Shou Qi gadgets.}
\label{fig:doushouqi_gadgets}
\end{center}
\end{figure}
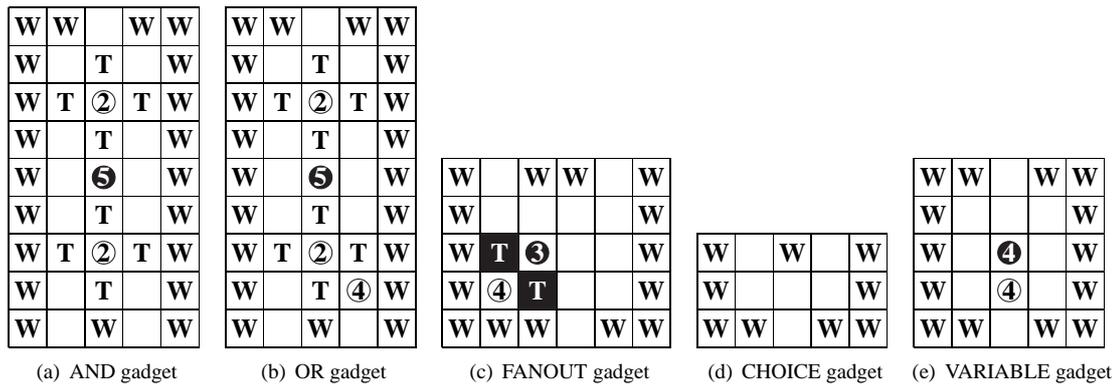

We will reduce from Planar Bounded 2CL. The main gadgets are shown in 
Figure~\ref{fig:doushouqi_gadgets}.  
The reversal of an edge in the original Bounded 2CL graph will be
modeled as the movement of a white dog (strength~4) into another
gadget. The VARIABLE vertex (in its initial state) can  
be reversed by the current player. The same is true for
the VARIABLE gadget. Since both pieces are of the same
strength, the current player can capture the piece of the 
other player, and move its piece through to the next gadget. 

There are some additional issues that need to be addressed. 
First, white dogs that enter a gadget should not be allowed 
to go back into the previous gadget. Next, white dogs in 
a FANOUT gadget should not be allowed to move through the same 
exit twice. Finally, black pieces in a VARIABLE gadget should 
not be allowed to leave
the gadget through the exit corresponding to the white edge in 
the graph game.
In order to prevent this behavior, we have created some additional
support gadgets, that will be attached to the inputs and outputs of the 
gadgets. These are shown in Figure~\ref{fig:doushouqi_support}.

The construction shown in Figure~\ref{fig:doushouqi_black} is 
called a \emph{black edge protector}. The white player can move a dog 
from bottom to top, but not the other way around. When a white dog 
enters the construction, the black piece will retreat behind either 
the left trap or the right trap, and the white dog can pass. 
When passed, the black piece moves back to its original position. 
The white piece cannot move back, it would be captured when 
entering the traps. The construction shown in 
Figure~\ref{fig:doushouqi_white} is a \emph{white edge protector}, 
it allows only white pieces to pass. Black pieces can be captured 
upon entering a trap. Note that these constraints do not apply when 
the opposing player attacks from both sides. We will show further 
on how to deal with this. 
The construction in Figure~\ref{fig:doushouqi_outflow} is an outflow 
protector, with the left entrance square as input and the right entrance 
square as output. It ensures that upon arrival of either one or two
white pieces, only one can pass.

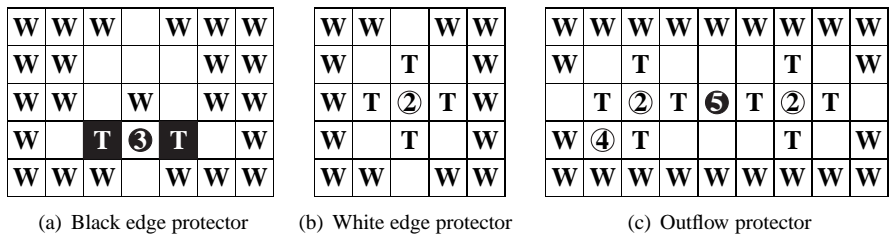
\begin{figure}[!ht]
\begin{center}
\subfigure[Black edge protector]{
\begin{picture}(35, 25)
\linethickness{0.075mm}
\multiput(0, 0)(5,0){8}{\line(0, 1){25}}
\multiput(0, 0)(0,5){6}{\line(1, 0){35}}
\put(2.5,22.5){\node[Nw=3,Nh=3,fillcolor=White,linecolor=White](00)(0,0){{\textbf{W}}} }
\put(7.5,22.5){\node[Nw=3,Nh=3,fillcolor=White,linecolor=White](01)(0,0){{\textbf{W}}} }
\put(12.5,22.5){\node[Nw=3,Nh=3,fillcolor=White,linecolor=White](02)(0,0){{\textbf{W}}} }
\put(22.5,22.5){\node[Nw=3,Nh=3,fillcolor=White,linecolor=White](04)(0,0){{\textbf{W}}} }
\put(27.5,22.5){\node[Nw=3,Nh=3,fillcolor=White,linecolor=White](05)(0,0){{\textbf{W}}} }
\put(32.5,22.5){\node[Nw=3,Nh=3,fillcolor=White,linecolor=White](06)(0,0){{\textbf{W}}} }
\put(2.5,17.5){\node[Nw=3,Nh=3,fillcolor=White,linecolor=White](50)(0,0){{\textbf{W}}} }
\put(7.5,17.5){\node[Nw=3,Nh=3,fillcolor=White,linecolor=White](51)(0,0){{\textbf{W}}} }
\put(27.5,17.5){\node[Nw=3,Nh=3,fillcolor=White,linecolor=White](55)(0,0){{\textbf{W}}} }
\put(32.5,17.5){\node[Nw=3,Nh=3,fillcolor=White,linecolor=White](56)(0,0){{\textbf{W}}} }
\put(2.5,12.5){\node[Nw=3,Nh=3,fillcolor=White,linecolor=White](100)(0,0){{\textbf{W}}} }
\put(7.5,12.5){\node[Nw=3,Nh=3,fillcolor=White,linecolor=White](101)(0,0){{\textbf{W}}} }
\put(17.5,12.5){\node[Nw=3,Nh=3,fillcolor=White,linecolor=White](103)(0,0){{\textbf{W}}} }
\put(27.5,12.5){\node[Nw=3,Nh=3,fillcolor=White,linecolor=White](105)(0,0){{\textbf{W}}} }
\put(32.5,12.5){\node[Nw=3,Nh=3,fillcolor=White,linecolor=White](106)(0,0){{\textbf{W}}} }
\put(2.5,7.5){\node[Nw=3,Nh=3,fillcolor=White,linecolor=White](150)(0,0){{\textbf{W}}} }
\put(12.5,7.5){\node[Nw=5,Nh=5,Nmr=0,fillcolor=Black,linecolor=Black](152)(0,0){\textcolor{White}{\textbf{T}}} }
\put(17.5,7.5){\node[Nw=3,Nh=3,fillcolor=Black,linecolor=Black](153)(0,0){\textcolor{White}{\textbf{3}}} }
\put(22.5,7.5){\node[Nw=5,Nh=5,Nmr=0,fillcolor=Black,linecolor=Black](154)(0,0){\textcolor{White}{\textbf{T}}} }
\put(32.5,7.5){\node[Nw=3,Nh=3,fillcolor=White,linecolor=White](156)(0,0){{\textbf{W}}} }
\put(2.5,2.5){\node[Nw=3,Nh=3,fillcolor=White,linecolor=White](200)(0,0){{\textbf{W}}} }
\put(7.5,2.5){\node[Nw=3,Nh=3,fillcolor=White,linecolor=White](201)(0,0){{\textbf{W}}} }
\put(12.5,2.5){\node[Nw=3,Nh=3,fillcolor=White,linecolor=White](202)(0,0){{\textbf{W}}} }
\put(22.5,2.5){\node[Nw=3,Nh=3,fillcolor=White,linecolor=White](204)(0,0){{\textbf{W}}} }
\put(27.5,2.5){\node[Nw=3,Nh=3,fillcolor=White,linecolor=White](205)(0,0){{\textbf{W}}} }
\put(32.5,2.5){\node[Nw=3,Nh=3,fillcolor=White,linecolor=White](206)(0,0){{\textbf{W}}} }
\end{picture}
\label{fig:doushouqi_black}}
\subfigure[White edge protector]{
\hspace{0.1cm}
\begin{picture}(25, 25)
\linethickness{0.075mm}
\multiput(0, 0)(5,0){6}{\line(0, 1){25}}
\multiput(0, 0)(0,5){6}{\line(1, 0){25}}
\put(2.5,22.5){\node[Nw=3,Nh=3,fillcolor=White,linecolor=White](00)(0,0){{\textbf{W}}} }
\put(7.5,22.5){\node[Nw=3,Nh=3,fillcolor=White,linecolor=White](01)(0,0){{\textbf{W}}} }
\put(17.5,22.5){\node[Nw=3,Nh=3,fillcolor=White,linecolor=White](03)(0,0){{\textbf{W}}} }
\put(22.5,22.5){\node[Nw=3,Nh=3,fillcolor=White,linecolor=White](04)(0,0){{\textbf{W}}} }
\put(2.5,17.5){\node[Nw=3,Nh=3,fillcolor=White,linecolor=White](50)(0,0){{\textbf{W}}} }
\put(12.5,17.5){\node[Nw=3,Nh=3,fillcolor=White,linecolor=White](50)(0,0){{\textbf{T}}} }
\put(22.5,17.5){\node[Nw=3,Nh=3,fillcolor=White,linecolor=White](54)(0,0){{\textbf{W}}} }
\put(2.5,12.5){\node[Nw=3,Nh=3,fillcolor=White,linecolor=White](100)(0,0){{\textbf{W}}} }
\put(7.5,12.5){\node[Nw=3,Nh=3,fillcolor=White,linecolor=White](50)(0,0){{\textbf{T}}} }
\put(12.5,12.5){\node[Nw=3,Nh=3,fillcolor=White,linecolor=Black](102)(0,0){{\textbf{2}}} }
\put(17.5,12.5){\node[Nw=3,Nh=3,fillcolor=White,linecolor=White](50)(0,0){{\textbf{T}}} }
\put(22.5,12.5){\node[Nw=3,Nh=3,fillcolor=White,linecolor=White](104)(0,0){{\textbf{W}}} }
\put(2.5,7.5){\node[Nw=3,Nh=3,fillcolor=White,linecolor=White](150)(0,0){{\textbf{W}}} }
\put(12.5,7.5){\node[Nw=3,Nh=3,fillcolor=White,linecolor=White](50)(0,0){{\textbf{T}}} }
\put(22.5,7.5){\node[Nw=3,Nh=3,fillcolor=White,linecolor=White](154)(0,0){{\textbf{W}}} }
\put(2.5,2.5){\node[Nw=3,Nh=3,fillcolor=White,linecolor=White](200)(0,0){{\textbf{W}}} }
\put(7.5,2.5){\node[Nw=3,Nh=3,fillcolor=White,linecolor=White](201)(0,0){{\textbf{W}}} }
\put(17.5,2.5){\node[Nw=3,Nh=3,fillcolor=White,linecolor=White](203)(0,0){{\textbf{W}}} }
\put(22.5,2.5){\node[Nw=3,Nh=3,fillcolor=White,linecolor=White](204)(0,0){{\textbf{W}}} }
\end{picture}
\label{fig:doushouqi_white}}
\hspace{0.1cm}
\subfigure[Outflow protector]{
\begin{picture}(45, 25)
\linethickness{0.075mm}
\multiput(0, 0)(5,0){10}{\line(0, 1){25}}
\multiput(0, 0)(0,5){6}{\line(1, 0){45}}
\put(2.5,22.5){\node[Nw=3,Nh=3,fillcolor=White,linecolor=White](00)(0,0){{\textbf{W}}} }
\put(7.5,22.5){\node[Nw=3,Nh=3,fillcolor=White,linecolor=White](01)(0,0){{\textbf{W}}} }
\put(12.5,22.5){\node[Nw=3,Nh=3,fillcolor=White,linecolor=White](02)(0,0){{\textbf{W}}} }
\put(17.5,22.5){\node[Nw=3,Nh=3,fillcolor=White,linecolor=White](03)(0,0){{\textbf{W}}} }
\put(22.5,22.5){\node[Nw=3,Nh=3,fillcolor=White,linecolor=White](04)(0,0){{\textbf{W}}} }
\put(27.5,22.5){\node[Nw=3,Nh=3,fillcolor=White,linecolor=White](05)(0,0){{\textbf{W}}} }
\put(32.5,22.5){\node[Nw=3,Nh=3,fillcolor=White,linecolor=White](06)(0,0){{\textbf{W}}} }
\put(37.5,22.5){\node[Nw=3,Nh=3,fillcolor=White,linecolor=White](07)(0,0){{\textbf{W}}} }
\put(42.5,22.5){\node[Nw=3,Nh=3,fillcolor=White,linecolor=White](08)(0,0){{\textbf{W}}} }
\put(2.5,17.5){\node[Nw=3,Nh=3,fillcolor=White,linecolor=White](50)(0,0){{\textbf{W}}} }
\put(12.5,17.5){\node[Nw=3,Nh=3,fillcolor=White,linecolor=White](50)(0,0){{\textbf{T}}} }
\put(32.5,17.5){\node[Nw=3,Nh=3,fillcolor=White,linecolor=White](50)(0,0){{\textbf{T}}} }
\put(42.5,17.5){\node[Nw=3,Nh=3,fillcolor=White,linecolor=White](58)(0,0){{\textbf{W}}} }
\put(7.5,12.5){\node[Nw=3,Nh=3,fillcolor=White,linecolor=White](50)(0,0){{\textbf{T}}} }
\put(12.5,12.5){\node[Nw=3,Nh=3,fillcolor=White,linecolor=Black](102)(0,0){{\textbf{2}}} }
\put(17.5,12.5){\node[Nw=3,Nh=3,fillcolor=White,linecolor=White](50)(0,0){{\textbf{T}}} }
\put(22.5,12.5){\node[Nw=3,Nh=3,fillcolor=Black,linecolor=Black](104)(0,0){\textcolor{White}{\textbf{5}}} }
\put(27.5,12.5){\node[Nw=3,Nh=3,fillcolor=White,linecolor=White](50)(0,0){{\textbf{T}}} }
\put(32.5,12.5){\node[Nw=3,Nh=3,fillcolor=White,linecolor=Black](106)(0,0){{\textbf{2}}} }
\put(37.5,12.5){\node[Nw=3,Nh=3,fillcolor=White,linecolor=White](50)(0,0){{\textbf{T}}} }
\put(2.5,7.5){\node[Nw=3,Nh=3,fillcolor=White,linecolor=White](150)(0,0){{\textbf{W}}} }
\put(7.5,7.5){\node[Nw=3,Nh=3,fillcolor=White,linecolor=Black](151)(0,0){{\textbf{4}}} }
\put(12.5,7.5){\node[Nw=3,Nh=3,fillcolor=White,linecolor=White](50)(0,0){{\textbf{T}}} }
\put(32.5,7.5){\node[Nw=3,Nh=3,fillcolor=White,linecolor=White](50)(0,0){{\textbf{T}}} }
\put(42.5,7.5){\node[Nw=3,Nh=3,fillcolor=White,linecolor=White](158)(0,0){{\textbf{W}}} }
\put(2.5,2.5){\node[Nw=3,Nh=3,fillcolor=White,linecolor=White](200)(0,0){{\textbf{W}}} }
\put(7.5,2.5){\node[Nw=3,Nh=3,fillcolor=White,linecolor=White](201)(0,0){{\textbf{W}}} }
\put(12.5,2.5){\node[Nw=3,Nh=3,fillcolor=White,linecolor=White](202)(0,0){{\textbf{W}}} }
\put(17.5,2.5){\node[Nw=3,Nh=3,fillcolor=White,linecolor=White](203)(0,0){{\textbf{W}}} }
\put(22.5,2.5){\node[Nw=3,Nh=3,fillcolor=White,linecolor=White](204)(0,0){{\textbf{W}}} }
\put(27.5,2.5){\node[Nw=3,Nh=3,fillcolor=White,linecolor=White](205)(0,0){{\textbf{W}}} }
\put(32.5,2.5){\node[Nw=3,Nh=3,fillcolor=White,linecolor=White](206)(0,0){{\textbf{W}}} }
\put(37.5,2.5){\node[Nw=3,Nh=3,fillcolor=White,linecolor=White](207)(0,0){{\textbf{W}}} }
\put(42.5,2.5){\node[Nw=3,Nh=3,fillcolor=White,linecolor=White](208)(0,0){{\textbf{W}}} }
\end{picture}
\label{fig:doushouqi_outflow}}
\caption[Dou Shou Qi support constructions]{Constructions used to support gadgets.}
\label{fig:doushouqi_support}
\end{center}
\end{figure}

A chain of two white edge protectors, one black edge protector 
and another two white edge protectors is called a 
\emph{one-way channel}. 
First, it ensures that no black piece can move through it. It 
will be captured upon entering a white edge protector. After 
a capture, the white cat can resume its position, preventing black 
pieces from passing, regardless of their number. Because it is always 
adjacent to another white edge protector, even an attack from both 
sides is useless. Next, it ensures that all black pieces within 
are unable to move out of the construction they started in, by 
the same argument. Finally, linking several one-way channels to 
each other ensures that white pieces can move through it in only 
one direction. White pieces that move in the opposite direction 
will be stopped at the black edge protector. Indeed, when having 
a piece at both the input and the output, the white player can 
enable its piece at the output to move back through this gadget. 
However, 
in order to pass a number of subsequent black edge protectors, the 
white player needs an equal number of white dogs at the input to 
ensure such a passing. There can never be more than two white pieces 
at the input of a one-way channel, thus linking three one-way 
channels together prevents white pieces from moving into the forbidden 
direction. 
A gadget protector is a chain of three one-way channels, one outflow 
protector and another three one-way channels. The gadget protector 
is attached to every entrance of the gadgets shown in 
Figure~\ref{fig:doushouqi_gadgets}, ensuring that these 
facilitate exactly the same behavior as their equivalents in the 
graph game. 

Now we have:
\begin{theorem}
\textsc{Dou Shou Qi} is PSPACE-hard.
\end{theorem}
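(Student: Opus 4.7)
The plan is to reduce from Planar Bounded 2CL, using the gadgets from Figure~\ref{fig:doushouqi_gadgets} together with the support constructions from Figure~\ref{fig:doushouqi_support}. Given a planar Bounded 2CL instance whose basic vertices are AND, OR, FANOUT, CHOICE, and VARIABLE (as in Figure~\ref{fig:thegadgets}), I would first embed the constraint graph on a rectangular grid (using the standard fact that planar graphs of bounded degree admit such an embedding; planarity is crucial here because, as noted in Section~\ref{sec:crossover}, arbitrary crossovers cannot be simulated in a Bounded setting). Then I would place a Dou Shou Qi gadget at each vertex location, route the wires as straight corridors, and attach a gadget protector (three one-way channels, an outflow protector, and three more one-way channels) to every entrance/exit port. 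The reversal of an edge in the 2CL graph is modelled by a white dog of strength~4 passing through the attached one-way channels into the adjacent gadget, where it meets an equally-strong black dog (or the appropriate black piece) that can be captured exactly when the inflow condition of the simulated vertex permits.

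Next I would verify, gadget by gadget, that each Dou Shou Qi construction faithfully simulates its 2CL counterpart. For the VARIABLE gadget this amounts to noting that the two dogs of equal strength sitting on the two exits encode the choice of which outgoing edge is pointed outward; for AND, OR, and FANOUT, the pieces of strength~2, 3, 5 placed between traps force precisely the inflow arithmetic demanded in Figure~\ref{fig:thegadgets} (an AND vertex can release its blue edge only when both red inputs have arrived, etc.); the CHOICE gadget's narrow corridor with water on both sides enforces the single-direction routing. The target edge of the 2CL graph becomes a designated entrance to white's final gadget, whose successful traversal leads either to the opponent's den or to a forced capture of all remaining black pieces, which I would engineer by appending a small winning trap adjacent to black's den.

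The key correctness claim is that white has a forced win in the resulting Dou Shou Qi position if and only if white wins the Planar Bounded 2CL game. The forward direction is straightforward: any legal 2CL play translates move-by-move into a corresponding sequence of dog movements, with the protectors guaranteeing that no spurious moves are available. The converse direction is where the main obstacle lies, and it is exactly the role of the gadget protectors: I would need to argue that in any white strategy, black's pieces cannot leak out of their home gadgets and white's pieces cannot backtrack or enter a disallowed port, so every reachable position decodes to a legal 2CL configuration. Specifically, I would show that (i) the white edge protector captures any black piece that tries to pass, (ii) the black edge protector prevents any white dog from moving backwards unless at least three simultaneous dogs are supplied (which the outflow protector forbids), and (iii) combining these into one-way channels and then into gadget protectors makes all deviating play strictly losing for the player who attempts it. Amply many ``free'' black pieces must be supplied in distant reservoirs, as in the standard Bounded 2CL reduction, so that black never runs out of legal responses and is forced to track white's simulation move for move.

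Finally, since each 2CL vertex is replaced by a gadget of constant size and the embedding uses area polynomial in the number of vertices, the reduction is polynomial-time. Combined with the PSPACE-completeness of Planar Bounded 2CL established in Section~\ref{sec:crossover}, this yields PSPACE-hardness of \textsc{Dou Shou Qi}. \hfill$\Box$
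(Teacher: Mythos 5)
Your proposal follows essentially the same reduction as the paper: from Planar Bounded 2CL, using the same vertex gadgets, the black and white edge protectors, one-way channels, outflow protectors and gadget protectors, with edge reversal modelled by a white dog of strength~4 and the same two-directional correctness claim. The only ingredient of the paper's argument you omit is the single black piece sent marching straight toward the white den, which the paper uses to bound the effective length of the game and to give black a forced win (rather than merely a draw) whenever white cannot satisfy the 2CL instance; since a draw already suffices to deny white a forced win, this omission does not affect the PSPACE-hardness conclusion.
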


\begin{proof}
Reduction from Bounded 2CL. Given a planar constraint graph made of AND, OR,
FANOUT, CHOICE and VARIABLE vertices, we construct a corresponding Dou
Shou Qi game board where the white player has a forced win if and only
if (s)he has a forced win on the original Bounded 2CL graph; otherwise
the black player has a forced win. Note that there are no draws in
Bounded 2CL, neither are there in the reduction by optimal play.

The target edge will be represented by a gadget containing a black den,
and it will have a black edge protector 
(Figure~\ref{fig:doushouqi_black}) in front of it, preventing other
pieces than the white dogs from entering it. The white player can move a 
piece into this gadget if and
only if (s)he can set the corresponding Bounded 2CL graph to true. 
The black player is given a piece that can move straight to the 
white den. This will take him so many moves, that if the 
corresponding Bounded 2CL graph can be set to true, by the time (s)he 
reaches it the white player has already won the game. \hfill$\Box$
\end{proof}

\section{Conclusions}
We reduced Acyclic Bounded NCL to \textsc{Klondike} and \textsc{Mahjong Solitaire} and (planar) Constraint Graph Satisfiability to \textsc{Nonogram},
proving them to be NP-complete. By using the acyclic property to our advantage,
we were able to keep the reductions elegant and easy to understand.
For games that require to return to an ``empty'' configuration (like
Klondike and Mahjong Solitaire) acyclicity is even technically essential.
We acknowledge the NCL framework to be well-suited for reductions for games, but
it is not without drawbacks. Often the primary gadgets are relatively
easy to construct, while the construction of victory gadgets is sometimes less trivial.
Finally, we reduced Planar Bounded 2CL to the game of Dou Shou Qi proving it to by PSPACE-hard.
The generic planarization of the NCL graphs and 2CL graphs is very useful for reductions
to games played on a 2-dimensional board. As an unbounded two-player
game Dou Shou Qi is expected to be EXPTIME-complete in the classification by~\citeaby{Hearn2009}.
It is an open problem to construct the Dou Shou Qi gadgets for the
special vertices, e.g., multiplayer AND, that build the relevant 2CL graph.

\section*{Acknowledgments}
The authors would like to thank the anonymous referees,
in particular for the suggested simplification regarding the Nonogram problem.

\bibliography{NCL}

\end{document}